\documentclass[11pt]{article}
\pdfoutput=1

\PassOptionsToPackage{table}{xcolor}

\newif\ifanon
\anonfalse

\usepackage[margin=1in]{geometry}

\usepackage{amsmath,amssymb,amsthm}
\newtheorem{definition}{Definition}
\newtheorem{proposition}{Proposition}

\newtheorem{remark}{Remark}

\usepackage{tikz}
\usepackage{pgfplots}
\pgfplotsset{compat=1.18}
\usetikzlibrary{arrows.meta,positioning,shapes.geometric,calc,patterns,decorations.pathreplacing,fit,backgrounds}

\definecolor{gOne}{RGB}{138,75,46}      % burnt sienna — Game 1 (PD resource)
\definecolor{gTwo}{RGB}{62,92,58}       % deep moss    — Game 2 (KV cache)
\definecolor{gThree}{RGB}{49,74,107}    % navy         — Game 3 (routing)
\definecolor{paperTone}{RGB}{246,242,234} % cream accent

\usepackage{booktabs}
\usepackage{multirow}
\usepackage{subcaption}
\arrayrulecolor{gOne}
\newcommand{\headrow}{\rowcolor{gOne!30}}

\usepackage[ruled,vlined,linesnumbered]{algorithm2e}

\usepackage{microtype}
\usepackage[T1]{fontenc}
\usepackage{lmodern}

\usepackage[numbers,sort&compress]{natbib}
\usepackage[colorlinks=true,citecolor=blue,linkcolor=blue,urlcolor=blue]{hyperref}

\usepackage{xspace}
\usepackage{enumitem}

\newcommand{\dynamo}{{\normalfont\textsc{Dynamo}}\xspace}
\newcommand{\R}{\mathbb{R}}
\newcommand{\N}{\mathbb{N}}

\title{The Price of Anarchy in Disaggregated Inference}

\ifanon
  \author{Anonymous Submission}
  \date{}
\else
  \author{
    Athos Georgiou \\
    NCA
  }
  \date{June 2026}
\fi

\begin{document}

\maketitle

\begin{abstract}
Disaggregated inference architectures physically separate prefill and decode phases onto distinct GPU pools, creating competing ``agents'' that share a fixed hardware budget.
We provide, to our knowledge, the first formal game-theoretic analysis of this architecture, using NVIDIA \dynamo as a concrete case study.
We model disaggregated serving as three coupled games---a two-player resource game between prefill and decode pools, a selfish caching game over the hierarchical KV cache, and a congestion game with positive externalities for request routing---and empirically validate the latter two (the P/D resource game is treated analytically; Section~\ref{sec:disc-limitations}).
We characterize how GPU saturation induces regime transitions that shift the game's payoff structure: below saturation, selfish behavior has bounded Price of Anarchy; at saturation, superlinear latency and cache externalities drive our empirical estimator $\widehat{\text{PoA}}$ (defined in Section~\ref{sec:ctrl-implementation}) upward.
Based on this analysis, we design an adaptive controller that detects saturation transitions in real time and adjusts routing parameters accordingly, shifting from cache-affinity exploitation to load-balanced congestion avoidance.
We instantiate our framework on a 3-node NVIDIA B200 cluster running \dynamo with two models---Nemotron-4-340B (TP=8, full-node workers with cross-InfiniBand KV transfers) and Llama-3.1-70B (TP=4)---and find the same three-regime $\widehat{\text{PoA}}$ structure with the same first post-knee grid point ($C=128$) on both models.
Adaptive routing shifts each model to a better operating point.
Our strongest result is on the 70B 1P/5D topology, where $\widehat{\text{PoA}}$ drops $3.1\times$ ($66.4 \to 21.5$) in the saturated phase at a 13\% throughput cost.
On the 70B 1P/2D, $\widehat{\text{PoA}}$ drops $2.2\times$ and TTFT P99 drops $7.6\times$ (see Section~\ref{sec:res-adaptive}).
On the 340B 1P/2D, the saturated-phase aggregate TTFT P99 drops $4.8\times$ ($28.3 \to 5.9$\,s).
Because the switch fires mid-phase, the post-switch steady-state is $\sim$0.97\,s (a $\sim$29$\times$ reduction versus static) at a 36\% saturated-phase throughput cost.
\end{abstract}

\section{Introduction}
\label{sec:introduction}

Large language model inference at datacenter scale is fundamentally a resource allocation problem.
Every request that arrives at a serving cluster must be assigned to a GPU, scheduled within a batch, and granted memory for its key-value (KV) cache, all within latency budgets measured in milliseconds.
When demand exceeds capacity, these allocation decisions determine whether the system degrades gracefully or collapses into queuing cascades.

The architecture winning at scale for high-throughput LLM serving is \emph{disaggregated inference}: physically separating the compute-bound prefill phase (processing the full input prompt) from the memory-bandwidth-bound decode phase (generating output tokens autoregressively) onto distinct GPU pools~\citep{zhong2024distserve, patel2024splitwise}.
NVIDIA's \dynamo framework~\citep{nvidia2025dynamo} is the most complete production implementation of this pattern, featuring a central Planner that dynamically reallocates GPUs between prefill and decode pools, a KV-aware Smart Router that directs requests based on cache locality and worker load, and a hierarchical KV Block Manager (KVBM) that manages cache placement across GPU HBM, CPU DRAM, local SSD, and networked storage tiers.

This architecture creates a natural multi-agent system.
Prefill and decode pools compete for a shared GPU budget, each optimizing a different objective (time-to-first-token versus inter-token latency).
Requests compete for placement on GPUs, where routing one request to a cache-warm worker benefits that request but may congest the worker for subsequent arrivals.
KV cache blocks compete for placement across memory tiers, where evicting one block to make room for another imposes a recomputation cost on the evicted request's future tokens.

Congestion games~\citep{rosenthal1973}, the Price of Anarchy~\citep{roughgarden2002}, and selfish caching~\citep{chun2004} give us exact names for what we observe in these systems: congestion from co-routing, cache eviction externalities, and the efficiency gap between selfish and coordinated assignment.
Although \dynamo's Smart Router is centralized, not decentralized, the Price of Anarchy remains the appropriate analytical tool.
We adopt a mechanism design interpretation: the router is a mechanism whose sequential greedy assignment---routing each request to the worker minimizing its individual cost given current loads---is equivalent to best-response dynamics in the corresponding congestion game, and PoA measures the efficiency of this mechanism relative to the global optimum~\citep{roughgarden2015}.
This framing is exact; it does not require assuming rationality of requests, only that the router's decision process mirrors best-response play.
Yet no prior work has formally modeled disaggregated inference through a game-theoretic lens.
The systems community has applied game-theoretic ideas to cluster-level GPU scheduling (Dominant Resource Fairness (DRF)~\citep{ghodsi2011} for multi-resource allocation, Themis~\citep{mahajan2020} for auction-based fairness, Shockwave~\citep{zheng2023shockwave} for Fisher market equilibria), but these operate at the job-scheduling timescale of seconds to minutes, not at the per-request routing timescale of microseconds to milliseconds that inference systems demand.

Meanwhile, Pareto frontier analysis has become the de facto standard for characterizing inference system performance.
NVIDIA's own AIConfigurator~\citep{nvidia2026aiconfigurator} explicitly computes Pareto-optimal configurations across throughput and latency, and tools like Vidur~\citep{agrawal2024vidur} and FlexGen~\citep{sheng2023flexgen} use Pareto filtering to navigate the tradeoff surface spanning TTFT, ITL, tokens/s/GPU, and cost.
What these tools compute, but do not name, are the \emph{outcomes of equilibrium play under different parameter regimes}.
Each setting of \dynamo's routing parameters induces a different equilibrium with a different position on the Pareto frontier; connecting Pareto analysis to game-theoretic equilibria reveals which frontier points are \emph{stable} under selfish behavior and how the frontier shifts at saturation.

No prior work has bridged these frameworks, for good reason.
Finding a Nash equilibrium is PPAD-complete (for three or more players~\citep{daskalakis2006} and for two-player games~\citep{chendeng2006}), and computing $\epsilon$-approximate equilibria remains PPAD-complete for constant $\epsilon$.
LLM serving decisions must complete in under one millisecond: SGLang~\citep{zheng2024sglang} keeps scheduling overhead below 2\% of inference time, and vLLM's scheduler runs every forward pass~\citep{kwon2023vllm}.
General Nash equilibrium computation is orders of magnitude too slow.

The resolution lies in a principle well-established in algorithmic game theory~\citep{nisan2007}: game theory's value for inference systems is \emph{analytical, not algorithmic}.
Every successful game-theory-inspired system in production avoids general equilibrium computation entirely: DRF uses progressive filling ($O(nm)$ time), Shockwave uses convex optimization for Fisher markets (polynomial time), Themis uses simple multi-round auctions.
The pattern is consistent: \textbf{game theory provides the analytical vocabulary and fairness guarantees, but the runtime algorithm must be a tractable approximation or a fundamentally different computation that achieves similar properties}.

What this paper does, concretely.
We name three games hiding inside \dynamo (P/D resource allocation, KV cache placement, request routing) and characterize equilibrium existence, structure, and efficiency for each, including how they couple (Section~\ref{sec:formalization}).
We then show that GPU saturation induces a regime transition in the game's payoff structure: below saturation the empirical routing inefficiency index is stable, and above the knee it grows rapidly as the prefill pool overloads (Section~\ref{sec:saturation}).
Our empirical validation covers Games~2 (KV cache placement) and~3 (request routing) across three prefill/decode topologies; Game~1 (P/D resource allocation) is characterized analytically but not empirically varied, since our topologies use fixed P/D splits (Section~\ref{sec:disc-limitations}).
We measure these values directly on a 3-node B200 cluster running \dynamo with two models (Nemotron-4-340B at TP=8 and Llama-3.1-70B at TP=4)---to our knowledge the first empirical routing-inefficiency measurements under a game-theoretic lens for a disaggregated inference system (Sections~\ref{sec:experiments}--\ref{sec:results}). We use $\widehat{\text{PoA}}$ to distinguish this estimator from a classical Price of Anarchy bound (Section~\ref{sec:ctrl-implementation}).
Finally, we ship an adaptive controller that detects the regime transition online and switches \dynamo's routing parameters per-request via the existing \texttt{router\_config\_override} hook.
The controller is a $\sim$270-line Python wrapper around \dynamo's \texttt{KvPushRouter}; it requires no changes to \dynamo's Rust core.
The largest $\widehat{\text{PoA}}$ improvement lands on the 70B 1P/5D topology; additional gains on 70B 1P/2D and 340B 1P/2D are reported in Section~\ref{sec:res-adaptive}.
The same regime structure---three regimes, first post-knee grid point at $C=128$, finite-difference magnitude $\approx 0.46\text{--}0.55$ across the knee on the $[64, 128]$ interval---appears on both models and all three topologies despite a $4.9\times$ size gap.

\begin{figure}[tp]
\centering
\resizebox{\textwidth}{!}{%
\begin{tikzpicture}[
  x=1cm, y=1cm,
  every node/.style={font=\small, inner sep=2pt},
  playerbox/.style={draw=black, line width=0.4pt, fill=white,
                    minimum width=2.1cm, minimum height=0.95cm, align=center},
  playerwide/.style={draw=black, line width=0.4pt, fill=white,
                     minimum width=4.2cm, minimum height=0.95cm, align=center},
  mechbox/.style={draw=black, line width=0.85pt, fill=white,
                  minimum width=4.6cm, minimum height=1.15cm, align=center},
  resbox/.style={draw=black, line width=0.4pt, fill=white,
                 minimum width=4.0cm, minimum height=1.55cm, align=center},
  tierbox/.style={draw=black, line width=0.3pt, fill=paperTone,
                  minimum width=0.72cm, minimum height=0.46cm, align=center,
                  font=\scriptsize\ttfamily},
  workerbox/.style={draw=black, line width=0.3pt, fill=paperTone,
                    minimum width=0.64cm, minimum height=0.5cm, align=center,
                    font=\scriptsize\ttfamily},
  arrow/.style={-{Stealth[length=1.8mm, width=1.5mm]}, line width=0.55pt},
  colhead/.style={font=\footnotesize\ttfamily},
]

% Column headers
\node[colhead] at ( 2.65, 8.00) {PLAYERS};
\node[colhead] at ( 9.20, 8.00) {MECHANISM};
\node[colhead] at (15.30, 8.00) {RESOURCES};

\draw[black, densely dotted, line width=0.3pt, opacity=0.45]
  (5.45, 7.75) -- (5.45, 0.60);
\draw[black, densely dotted, line width=0.3pt, opacity=0.45]
  (12.90, 7.75) -- (12.90, 0.60);

% Row 1 — Game 1 : Gamma_PD (resource allocation)
\begin{scope}
  \fill[gOne!7]  (0.3, 5.90) rectangle (17.5, 7.70);
  \draw[gOne, densely dashed, line width=0.45pt]
                 (0.3, 5.90) rectangle (17.5, 7.70);

  \node[fill=gOne, text=white, font=\scriptsize\ttfamily,
        inner xsep=7pt, inner ysep=3.5pt, anchor=north west]
        at (0.3, 7.82) {\strut GAME 1 \;$\cdot$\; $\Gamma_{\mathrm{PD}}$
                        \;\textnormal{\itshape resource allocation}};

  \node[playerbox] (P)  at (1.55, 6.80) {\textbf{Prefill $P$}};
  \node[playerbox] (D)  at (3.75, 6.80) {\textbf{Decode $D$}};

  \node[mechbox] (Planner) at (9.20, 6.80)
       {\textbf{Planner}\\[1pt]
        {\scriptsize\itshape\color{black!55}time-series forecast\,$\cdot$\,$\pm 1$ GPU}};

  \node[resbox] (GPU) at (15.30, 6.80) {\textbf{GPU budget $G$}};

  \draw[arrow, gOne] (D.east)       -- (Planner.west);
  \draw[arrow, gOne] (Planner.east) -- (GPU.west);
\end{scope}

% Row 2 — Game 2 : Gamma_KV (KV cache placement)
\begin{scope}
  \fill[gTwo!7]  (0.3, 3.30) rectangle (17.5, 5.10);
  \draw[gTwo, densely dashed, line width=0.45pt]
                 (0.3, 3.30) rectangle (17.5, 5.10);

  \node[fill=gTwo, text=white, font=\scriptsize\ttfamily,
        inner xsep=7pt, inner ysep=3.5pt, anchor=north west]
        at (0.3, 5.22) {\strut GAME 2 \;$\cdot$\; $\Gamma_{\mathrm{KV}}$
                        \;\textnormal{\itshape KV cache placement}};

  \node[playerwide] (W) at (2.65, 4.20)
       {\textbf{GPU workers $\{1,\ldots,n\}$}};

  \node[mechbox] (KVBM) at (9.20, 4.20)
       {\textbf{KV Block Manager}\\[1pt]
        {\scriptsize\itshape\color{black!55}frequency eviction\,$\cdot$\,Stackelberg leader}};

  \node[resbox] (Mem) at (15.30, 4.20) {};
  \node[anchor=center, font=\small\bfseries] at (15.30, 4.75)
       {tiered memory $G_1$--$G_4$};
  \node[tierbox] at (13.95, 3.85) {G1};
  \node[tierbox] at (14.75, 3.85) {G2};
  \node[tierbox] at (15.55, 3.85) {G3};
  \node[tierbox] at (16.35, 3.85) {G4};

  \draw[arrow, gTwo] (W.east)    -- (KVBM.west);
  \draw[arrow, gTwo] (KVBM.east) -- (Mem.west);
\end{scope}

% Row 3 — Game 3 : Gamma_R (request routing)
\begin{scope}
  \fill[gThree!7] (0.3, 0.70) rectangle (17.5, 2.50);
  \draw[gThree, densely dashed, line width=0.45pt]
                  (0.3, 0.70) rectangle (17.5, 2.50);

  \node[fill=gThree, text=white, font=\scriptsize\ttfamily,
        inner xsep=7pt, inner ysep=3.5pt, anchor=north west]
        at (0.3, 2.62) {\strut GAME 3 \;$\cdot$\; $\Gamma_{\mathrm{R}}$
                        \;\textnormal{\itshape request routing}};

  \node[playerwide] (Q) at (2.65, 1.60)
       {\textbf{requests $Q=\{q_1,\ldots,q_m\}$}};

  \node[mechbox] (Router) at (9.20, 1.60)
       {\textbf{Smart Router}\\[1pt]
        {\scriptsize\itshape\color{black!55}sequential greedy $\equiv$ best-response}};

  \node[resbox] (Workers) at (15.30, 1.60) {};
  \node[anchor=center, font=\small\bfseries] at (15.30, 2.15)
       {decode workers $W$};
  \node[workerbox] at (13.70, 1.25) {$w_1$};
  \node[workerbox] at (14.40, 1.25) {$w_2$};
  \node[workerbox] at (15.10, 1.25) {$w_3$};
  \node[workerbox] at (15.80, 1.25) {$w_4$};
  \node[workerbox] at (16.60, 1.25) {$w_m$};

  \draw[arrow, gThree] (Q.east)      -- (Router.west);
  \draw[arrow, gThree] (Router.east) -- (Workers.west);
\end{scope}

\end{tikzpicture}%
}% end \resizebox

\caption{\textbf{The three games recast as a mechanism-design table.}
Each row names the \emph{players}, the \dynamo \emph{mechanism} that
arbitrates between them, and the \emph{resources} being contested.
Reading down the middle column recovers the standard \dynamo
architecture---Planner, KVBM, Smart Router---but each component
now carries its game-theoretic role as an explicit caption.
Saturation is what binds the three rows: when one row's cost function
approaches its pole, the externalities propagate across rows.}
\label{fig:three-games}
\end{figure}

\section{Background}
\label{sec:background}

\subsection{Disaggregated Inference}
\label{sec:bg-disaggregated}

Autoregressive LLM inference has two computationally distinct phases.
\textbf{Prefill} processes the entire input prompt in a single forward pass, producing the KV cache entries for all input tokens; this phase is \emph{compute-bound}, limited by the GPU's peak FLOPS.
\textbf{Decode} generates output tokens one at a time, each reading the full KV cache but computing attention over only the new token; this phase is \emph{memory-bandwidth-bound}, limited by HBM read throughput.

Traditional co-located serving runs both phases on the same GPU, creating an inherent resource conflict: prefill's compute-intensity and decode's memory-bandwidth-intensity have different scaling characteristics, different optimal batch sizes, and different sensitivity to GPU count.
Disaggregated serving~\citep{zhong2024distserve, patel2024splitwise} resolves this by physically separating the two phases onto distinct GPU pools, connected by a high-bandwidth KV cache transfer layer.
A request arriving at the system is first routed to a prefill worker, which processes the prompt and produces the KV cache.
The KV cache is then transferred (via NVLink, InfiniBand, or RDMA) to a decode worker, which generates the output token by token.

This separation enables independent scaling: the prefill pool can be sized to meet time-to-first-token (TTFT) targets while the decode pool is sized for inter-token latency (ITL) and throughput.
NVIDIA reports a 30$\times$ throughput improvement for DeepSeek-R1 671B on a disaggregated GB200 NVL72 rack compared to traditional co-located serving~\citep{nvidia2025dynamo}.

\subsection{NVIDIA Dynamo Architecture}
\label{sec:bg-dynamo}

\dynamo is NVIDIA's production framework for disaggregated LLM inference.
Its architecture comprises four key components relevant to our game-theoretic analysis:

\paragraph{The Planner.}
A central autoscaler that dynamically adjusts the ratio of prefill to decode workers.
The Planner scrapes Prometheus metrics (TTFT histograms, ITL distributions, queue depths) every 30 seconds, runs a time-series predictor (configurable as ARIMA, Prophet, or Kalman filter), and issues scaling decisions.
It is constrained to $\pm 1$ worker change per adjustment interval, with a 3-interval grace period for newly assigned decode workers.
The Planner profiles the TTFT-vs-prefill-count and ITL-vs-decode-count tradeoff curves during a pre-deployment sweep, effectively computing the response functions of a two-player game.

\paragraph{The Smart Router.}
A KV-aware request router that assigns incoming requests to workers.
The router maintains a global radix tree (KvIndexer) tracking which KV cache blocks reside on which GPUs.
For each incoming request, it computes a per-worker cost:
\begin{equation}
\label{eq:router-cost}
  c_j = \omega \cdot b_j^{\text{prefill}} + b_j^{\text{active}}
\end{equation}
where $b_j^{\text{prefill}}$ is the number of token blocks that would need to be prefilled on worker $j$ (inversely related to cache overlap), $b_j^{\text{active}}$ is the number of active decode blocks on worker $j$ (a proxy for current load), and $\omega$ is the \texttt{kv\_overlap\_score\_weight} parameter that controls the tradeoff between cache affinity and load balancing.
The router selects the worker with the lowest cost when the \texttt{router\_temperature} $\tau = 0$ (deterministic), or samples from a softmax distribution over costs when $\tau > 0$ (stochastic):
\begin{equation}
\label{eq:router-softmax}
  P(\text{select worker } j) = \frac{\exp(-c_j / \tau)}{\sum_{k} \exp(-c_k / \tau)}
\end{equation}

\paragraph{The KV Block Manager (KVBM).}
A hierarchical cache manager that places KV cache blocks across four tiers: GPU HBM (G1, fastest, 192\,GB per B200), CPU DRAM (G2), local SSD (G3), and networked storage (G4).
Eviction follows a frequency-based policy: each block's access frequency is initialized at 1, doubled on cache hit, and decremented by 1 per time-decay step.
Blocks with frequency $\geq 2$ are eligible for promotion from lower tiers.
Cross-node KV transfer is handled by NIXL (NVIDIA Inference Xfer Library), which supports RDMA via UCX over NVLink and InfiniBand.

\paragraph{The Event Plane.}
A messaging layer (NATS JetStream or ZeroMQ) that propagates KV cache block creation/eviction events, router-to-router synchronization messages, and worker metric updates.
Workers register in etcd with lease-backed heartbeats; lease expiry triggers automatic deregistration and failure detection.

Figure~\ref{fig:dynamo-arch} illustrates these components and their interactions.

\begin{figure}[t]
\centering
\resizebox{\textwidth}{!}{%
\begin{tikzpicture}[
  x=1cm, y=1cm,
  every node/.style={font=\small, inner sep=2pt},
  mech/.style={line width=0.85pt, fill=white,
               minimum width=3.2cm, minimum height=1.25cm, align=center},
  worker/.style={draw=black, line width=0.4pt, fill=paperTone,
                 minimum width=3.0cm, minimum height=1.15cm, align=center},
  storage/.style={draw=gTwo, densely dashed, line width=0.45pt, fill=gTwo!7,
                  minimum width=14.0cm, minimum height=1.25cm, align=center},
  events/.style={draw=black, line width=0.3pt, fill=paperTone,
                 minimum width=14.0cm, minimum height=0.95cm, align=center},
  endpoint/.style={font=\footnotesize\itshape},
  arrow/.style={-{Stealth[length=1.8mm, width=1.5mm]}, line width=0.55pt},
  ctrl/.style={arrow, densely dashed, gOne},
  dataarrow/.style={arrow, black!75},
  elabel/.style={font=\scriptsize\itshape, align=center, inner sep=1.5pt, color=black!55}
]
  % Planner (Game 1 hue: burnt sienna)
  \node[mech, draw=gOne, fill=gOne!7] (planner) at (0,3.0)
    {\textbf{Planner}\\[1pt]
     {\scriptsize\itshape\color{black!55}autoscaler\,$\cdot$\,$\sim$30\,s forecast}};

  % Pipeline row: Smart Router (Game 3) -> Prefill -> Decode
  \node[mech, draw=gThree, fill=gThree!7] (router) at (-5.6,0)
    {\textbf{Smart Router}\\[1pt]
     {\scriptsize\itshape\color{black!55}KV-aware\,$\cdot$\,softmax}};
  \node[worker] (prefill) at (0.0,0)
    {\textbf{Prefill Workers}\\[1pt]
     {\scriptsize\itshape\color{black!55}compute-bound}};
  \node[worker] (decode) at (5.6,0)
    {\textbf{Decode Workers}\\[1pt]
     {\scriptsize\itshape\color{black!55}bandwidth-bound}};

  \node[endpoint, left=0.2cm of router]  (client) {Client};
  \node[endpoint, right=0.2cm of decode] (tokens) {Tokens};

  % KVBM (Game 2 hue: moss; dashed like row bands in Fig. 1)
  \node[storage, below=1.25cm of prefill] (kvbm)
    {\textbf{KV Block Manager}\\[1pt]
     {\scriptsize\itshape\color{black!55}G1 HBM $\rightarrow$ G2 DRAM $\rightarrow$ G3 SSD $\rightarrow$ G4 networked\,$\cdot$\,frequency eviction}};

  % Event plane (neutral infrastructure)
  \node[events, below=0.45cm of kvbm] (events)
    {\textbf{Event Plane}\,{\scriptsize\itshape\color{black!55}(NATS JetStream / ZeroMQ\,$\cdot$\,etcd registry)}};

  % Data plane (request flow, black)
  \draw[dataarrow] (client)  -- (router);
  \draw[dataarrow] (router)  -- (prefill)
    node[elabel, midway, above] {route};
  \draw[dataarrow] (prefill) -- (decode)
    node[elabel, midway, above] {KV via NIXL};
  \draw[dataarrow] (decode)  -- (tokens);

  % Control plane from Planner (burnt-sienna dashed)
  \draw[ctrl] (planner.south west) --
    node[elabel, pos=0.55, above left=-1pt] {scale P/D}
    (router.north);
  \draw[ctrl] (planner.south) --
    node[elabel, midway, right] {Prometheus}
    (prefill.north);
  \draw[ctrl] (planner.south east) -- (decode.north);

  % Cache edges (navy for router lookup; moss for worker read/write)
  \draw[arrow, gThree] (router.south)  --
    node[elabel, pos=0.55, left] {lookup}
    (router.south |- kvbm.north);
  \draw[arrow, gTwo]   (prefill.south) --
    node[elabel, pos=0.55, left] {write}
    (prefill.south |- kvbm.north);
  \draw[arrow, gTwo]   (decode.south)  --
    node[elabel, pos=0.55, right] {read}
    (decode.south |- kvbm.north);

  % KVBM -> Event plane (moss dashed)
  \draw[arrow, gTwo, densely dashed] (kvbm.south) --
    node[elabel, midway, right] {create/evict events}
    (events.north);
\end{tikzpicture}%
}
\caption{\dynamo architecture for disaggregated inference. Components are colored by the game they implement, matching Figure~\ref{fig:three-games}: the \textcolor{gOne}{Planner} (burnt sienna) implements Game~1 (P/D resource allocation), adjusting the worker ratio every $\sim$30\,s from Prometheus telemetry; the \textcolor{gThree}{Smart Router} (navy) implements Game~3 (request routing), assigning requests to prefill workers via KV-cache-aware costs (Eq.~\ref{eq:router-cost}); the \textcolor{gTwo}{KV Block Manager} (moss, dashed) implements Game~2 (cache placement), staging blocks across GPU HBM, CPU DRAM, local SSD, and networked storage under a frequency-eviction policy. Prefill writes KV blocks and transfers them to decode via NIXL (RDMA over NVLink or InfiniBand). The Event Plane propagates cache create/evict events and etcd-registered worker heartbeats. Black arrows denote the request data plane; dashed burnt-sienna arrows denote Planner control flow.}
\label{fig:dynamo-arch}
\end{figure}

\subsection{Game Theory Preliminaries}
\label{sec:bg-gametheory}

We use standard game-theoretic concepts: normal-form games, Nash equilibrium, congestion games~\citep{rosenthal1973}, Price of Anarchy~\citep{roughgarden2002}, Pareto optimality, and Wardrop equilibrium~\citep{wardrop1952}.
Formal definitions are in Appendix~\ref{sec:appendix-defs}; we highlight only the key concepts below.

A \emph{congestion game} is one where each player's cost depends only on the \emph{count} of co-users on shared resources, not their identities, guaranteeing pure Nash equilibria via an exact potential function~\citep{monderer1996}.
The \emph{Price of Anarchy} (PoA) is the ratio of social cost at the worst Nash equilibrium to the social optimum; PoA $= 1$ means selfish behavior is optimal, while larger values indicate inefficiency.

\section{Related Work}
\label{sec:related-work}

Game theory and Pareto analysis have each been applied to GPU resource allocation, but never to inference routing---and never to each other.

\subsection{Game-Theoretic Resource Allocation in Computing Systems}
\label{sec:rw-gametheory}

Game-theoretic resource allocation is well-established at the cluster scheduling level---DRF runs in Mesos, Shockwave uses Fisher markets---but has never been applied at the inference routing level.

\textbf{Dominant Resource Fairness} (DRF)~\citep{ghodsi2011} is the most widely deployed game-theory-inspired allocator, running in Apache Mesos and YARN.
DRF satisfies four game-theoretic properties (sharing incentive, strategy-proofness, envy-freeness, and Pareto efficiency) for multi-resource allocation.
For LLM inference, DRF's multi-resource framework directly applies when prefill and decode pools compete for GPU compute, GPU memory, and network bandwidth.
DRF runs in $O(nm)$ time via progressive filling, deliberately avoiding equilibrium computation.
However, \citet{fikioris2024} proved that extending DRF to dynamic demands loses incentive compatibility, a property that holds only in static settings.
This result has direct implications for disaggregated serving, where demand patterns shift as workload composition changes.

\textbf{Themis}~\citep{mahajan2020} introduced auction-based GPU scheduling with formal fairness guarantees.
Its multi-round partial allocation auction is strategy-proof and Pareto efficient: workloads bid on GPU resources through a central arbiter, achieving a 2.25$\times$ fairness improvement over existing schedulers on Microsoft production traces.
Themis demonstrated that DRF fails for ML workloads due to gang scheduling requirements and placement sensitivity, motivating richer auction mechanisms.

\textbf{Shockwave}~\citep{zheng2023shockwave} modeled GPU scheduling as a Volatile Fisher Market, where jobs are buyers with budgets and GPUs are goods with prices.
Fisher market equilibrium with linear utilities is solvable via convex optimization (Nash social welfare maximization) in polynomial time; the authors deliberately chose a market structure that avoids PPAD-hard equilibrium computation.
Shockwave improved makespan by 1.3$\times$ and fairness by 2$\times$.

\citet{xu2025edgellm} combine a double Dutch auction with test-time reinforcement learning for mobile edge LLM inference.
This is the closest existing work combining game-theoretic mechanism design with LLM serving, though it targets edge deployments rather than datacenter-scale disaggregated serving, and does not model the prefill-decode separation or KV cache placement as separate game-theoretic problems.

\textbf{Nash bargaining} has been applied to cloud resource allocation by~\citet{perin2019}, who showed that cooperative Nash bargaining halved the average number of tasks compared to selfish best-response allocation.
\citet{facchinei2007} provide the authoritative survey on GNEPs, establishing that shared-constraint GNEPs generically admit a continuum of equilibria and that the variational equilibrium is the standard selection.
This framework directly underpins our GNEP formulation of the prefill-decode resource game.

The most directly relevant theoretical work is \citet{gaitonde2023}, who provide the first rigorous PoA analysis for strategic queuing systems, proving that no-regret learners require $2\times$ capacity compared to centralized scheduling.
Their model of selfish routing to parallel servers with queuing delay is the closest theoretical precedent for our routing game, though they do not consider cache externalities or disaggregated architectures.

None of these systems operate at inference-request timescales.
They make allocation decisions over seconds to minutes, not at the sub-millisecond cycles that LLM schedulers demand.

\subsection{Disaggregated LLM Serving Systems}
\label{sec:rw-disaggregated}

\textbf{DistServe}~\citep{zhong2024distserve} was the first system to formally advocate disaggregating prefill and decode for goodput optimization.
DistServe assigns prefill and decode to different GPUs, enabling independent parallelism strategies (larger tensor parallelism for prefill, more replicas for decode).
\textbf{Splitwise}~\citep{patel2024splitwise} independently proposed phase splitting with a focus on heterogeneous hardware: prefill on compute-optimized GPUs and decode on memory-bandwidth-optimized GPUs.
\textbf{TaiChi}~\citep{wang2025taichi} demonstrated that neither pure aggregation nor pure disaggregation is Pareto-optimal under balanced SLO requirements, proposing an adaptive hybrid that switches between modes.
This result directly motivates game-theoretic analysis: the optimal operating point depends on workload characteristics and system state, suggesting a game where the ``strategy'' is the degree of disaggregation.

\dynamo~\citep{nvidia2025dynamo} is the most complete production implementation, integrating a Planner for dynamic P/D rebalancing, a KV-aware Smart Router, hierarchical KV cache management (KVBM), and a high-performance transfer layer (NIXL) for cross-node KV movement.
Its architecture creates the richest game-theoretic structure among existing systems: the Planner plays a resource allocation game, the router plays a congestion game, and the KVBM plays a caching game.

\subsection{Pareto Analysis in LLM Serving}
\label{sec:rw-pareto}

Pareto frontier analysis dominates LLM serving evaluation but has not been connected to game-theoretic equilibrium concepts.

NVIDIA's \textbf{AIConfigurator}~\citep{nvidia2026aiconfigurator} contains a dedicated Pareto Analyzer that enumerates all valid serving configurations (spanning tensor parallelism, expert parallelism, batch sizes, and aggregated versus disaggregated modes) to identify Pareto-optimal points across throughput and latency.
For Qwen-235B on 64 H200 GPUs, AIConfigurator generates Pareto curves comparing aggregated versus disaggregated serving in approximately 30 seconds.
It achieves up to 50\% improvement for MoE models and is now integrated into \dynamo's Planner.

\textbf{Vidur}~\citep{agrawal2024vidur} uses high-fidelity simulation to explore configuration spaces, finding that small SLO changes (20ms in time-between-tokens) can shift the cost-optimal Pareto point by 1.85$\times$.
\textbf{KV Pareto}~\citep{gokhale2025kvpareto} applies Pareto analysis specifically to KV cache optimization, computing memory-accuracy frontiers across quantization levels and chunked prefill configurations, achieving 68--78\% memory savings with only 1--3\% accuracy loss.
\textbf{FlexGen}~\citep{sheng2023flexgen} used linear programming to find Pareto-optimal offloading strategies across GPU, CPU, and disk, achieving a 100$\times$ higher throughput frontier for OPT-175B.

All deployed multi-objective techniques are simple (enumeration, simulation, Pareto filtering) because configuration spaces have been bounded enough for exhaustive search.
As disaggregation, MoE routing, and heterogeneous hardware grow configuration spaces exponentially, exhaustive enumeration will stop being feasible.

\subsection{Caching Games}
\label{sec:rw-caching}

Selfish caching games are well-understood theoretically, but no one has applied them to KV cache management in LLM inference.

The foundational work on \textbf{selfish caching games}~\citep{chun2004} models server nodes that choose whether to cache data locally (at cost $\alpha$) or access remote copies (at distance-dependent cost $d(i,j)$).
Pure Nash equilibria always exist, and the Price of Anarchy varies with network topology: $\text{PoA} = 1$ on complete graphs (selfish caching is optimal) but $\text{PoA} = O(\sqrt{n})$ on line topologies.
This result has direct implications for GPU clusters: NVLink interconnects within a node are topologically close to complete graphs, suggesting near-optimal selfish KV cache placement within a node, while cross-node placement over InfiniBand (a sparser topology) may exhibit higher inefficiency.

The \textbf{capacitated selfish replication} extension applies directly to HBM-constrained GPUs: when caches have limited capacity, polynomial-time algorithms can find Nash equilibria under hierarchical network topologies.
\citet{ma2021} proved a Braess-like \textbf{cache paradox}: adding cache nodes can worsen the PoA on directed graphs.
We discuss the implications for GPU KV cache pools in Section~\ref{sec:game2}.

For congestion games with positive externalities, \citet{dekeijzer2012} proved that finding social optima is NP-hard but admits a 2-approximation.
\citet{milchtaich1996} showed that player-specific payoff functions preclude potential functions in general, a result we apply to KV cache overlap in Section~\ref{sec:game3}.

\subsection{Summary}
\label{sec:rw-summary}

Table~\ref{tab:related-work} summarizes the positioning.

\begin{table}[t]
\centering
\caption{Positioning relative to existing work. Our contribution is the first to formalize disaggregated inference as coupled games and connect game-theoretic equilibrium analysis to the specific mechanisms of a production inference system.}
\label{tab:related-work}
\small
\begin{tabular}{@{}lccccc@{}}
\toprule
\headrow \textbf{System / Work} & \textbf{Game} & \textbf{Pareto} & \textbf{Disagg.} & \textbf{KV Cache} & \textbf{Per-request} \\
\midrule
DRF~\citep{ghodsi2011}               & \checkmark &            &            &            &            \\
Themis~\citep{mahajan2020}            & \checkmark & \checkmark &            &            &            \\
Shockwave~\citep{zheng2023shockwave}  & \checkmark &            &            &            &            \\
Edge LLM~\citep{xu2025edgellm}            & \checkmark &         &            &            & \checkmark \\
AIConfigurator~\citep{nvidia2026aiconfigurator} & & \checkmark & \checkmark &            &            \\
Vidur~\citep{agrawal2024vidur}        &            & \checkmark & \checkmark &            &            \\
KV Pareto~\citep{gokhale2025kvpareto}  &            & \checkmark &            & \checkmark &            \\
DistServe~\citep{zhong2024distserve}  &            &            & \checkmark &            &            \\
Splitwise~\citep{patel2024splitwise}  &            &            & \checkmark &            &            \\
TaiChi~\citep{wang2025taichi}         &            & \checkmark & \checkmark &            &            \\
Selfish caching~\citep{chun2004}      & \checkmark &            &            & \checkmark &            \\
\midrule
\textbf{This work}                    & \checkmark & \checkmark & \checkmark & \checkmark & \checkmark \\
\bottomrule
\end{tabular}
\end{table}

\section{Formalization: Disaggregated Serving as Coupled Games}
\label{sec:formalization}

We model disaggregated inference as three coupled games that operate at different timescales and granularities.
The prefill-decode resource game operates at the Planner's adjustment timescale (tens of seconds), the request routing game operates per-request (sub-millisecond), and the KV cache placement game operates continuously as blocks are created, evicted, and promoted.
We analyze each game independently before characterizing their coupling in Section~\ref{sec:coupling}.

\paragraph{Forward references.} This section references the superlinear latency form (Equation~\ref{eq:latency-superlinear}) and the regime-transition proposition (Proposition~\ref{prop:poa-divergence}), both of which are defined in Section~\ref{sec:saturation}. Readers who prefer to read the saturation analysis first can skip to Section~\ref{sec:saturation} and return here; the formalization is self-contained modulo those two forward pointers.

\subsection{Game 1: Prefill-Decode Resource Allocation}
\label{sec:game1}

The Planner's core decision is how to partition a fixed GPU budget between prefill and decode pools.
We model this as a two-player non-cooperative game.

\begin{definition}[Prefill-Decode Resource Game]
\label{def:pd-game}
The \emph{prefill-decode resource game} $\Gamma_{\text{PD}}$, parameterized by arrival rate $\lambda$, is a two-player game with a shared constraint~\citep{rosen1965}:
\begin{itemize}[nosep]
  \item \textbf{Players:} $N = \{P, D\}$ (prefill pool, decode pool).
  \item \textbf{Strategies:} Player $P$ claims $G_P \in [0, G]$ GPUs; player $D$ claims $G_D \in [0, G]$ GPUs.
  \item \textbf{Shared constraint:} $G_P + G_D \leq G$, where $G$ is the total GPU budget.
  \item \textbf{Utility functions:}
  \begin{align}
    u_P(G_P, G_D) &= -V_{\text{TTFT}}(G_P, \lambda) \label{eq:util-prefill} \\
    u_D(G_P, G_D) &= -V_{\text{ITL}}(G_D, \lambda, G_P) \label{eq:util-decode}
  \end{align}
  where $V_{\text{TTFT}}(G_P, \lambda)$ is the TTFT SLO violation rate given $G_P$ prefill GPUs and arrival rate $\lambda$, and $V_{\text{ITL}}(G_D, \lambda, G_P)$ is the ITL SLO violation rate.
\end{itemize}
The shared constraint couples the players' feasible sets, making $\Gamma_{\text{PD}}$ a Generalized Nash Equilibrium Problem (GNEP).
\end{definition}

The decode pool's utility depends on $G_P$ through the KV transfer rate: a starved prefill pool idles decode workers regardless of decode GPU count.
This coupling is asymmetric: prefill utility depends only on its own allocation and arrival rate, while decode utility depends on both.

\begin{proposition}[Equilibrium of the P/D game]
\label{prop:pd-equilibrium}
Under the assumptions that $V_{\text{TTFT}}$ is strictly convex decreasing in $G_P$ and $V_{\text{ITL}}$ is strictly convex decreasing in $G_D$ (diminishing returns from additional GPUs), the prefill-decode resource game has a unique variational equilibrium~\citep{rosen1965, facchinei2007} at the allocation $(G_P^*, G_D^*)$ satisfying $G_P^* + G_D^* = G$ and:
\begin{equation}
\label{eq:pd-nash}
  \frac{\partial V_{\text{TTFT}}}{\partial G_P}\bigg|_{G_P^*} = \frac{\partial V_{\text{ITL}}}{\partial G_D}\bigg|_{G_D^*}
\end{equation}
That is, the marginal SLO improvement from adding one GPU to either pool is equal at equilibrium.
\end{proposition}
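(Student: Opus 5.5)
The plan is to recast the variational equilibrium as the solution of a variational inequality over the shared feasible set and read off \eqref{eq:pd-nash} from its KKT conditions. Let $K = \{(G_P,G_D)\in[0,G]^2 : G_P+G_D\le G\}$, which is nonempty, compact and convex. Each player minimizes its violation rate $V$, i.e.\ maximizes $u=-V$. Following \citet{rosen1965} and \citet{facchinei2007}, a variational equilibrium is a point $x^*=(G_P^*,G_D^*)\in K$ solving
\[
F(x^*)^\top (x-x^*)\ge 0 \quad \text{for all } x\in K,
\]
with pseudo-gradient
\[
F(G_P,G_D)=\Big(\partial_{G_P}V_{\text{TTFT}}(G_P,\lambda),\ \partial_{G_D}V_{\text{ITL}}(G_D,\lambda,G_P)\Big).
\]
Every solution of this variational inequality is a generalized Nash equilibrium in which both players share a single Lagrange multiplier on $G_P+G_D\le G$. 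This shared multiplier is exactly what produces the equal-marginal condition.

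\textbf{Existence.} Each player's cost is continuous and convex in its own variable, and $K$ is compact and convex. The Rosen / Hartman--Stampacchia existence theorem for variational inequalities therefore gives at least one solution.

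\textbf{The constraint binds.} Both $V_{\text{TTFT}}$ and $V_{\text{ITL}}$ are strictly decreasing in their own arguments, so both components of $F$ are strictly negative. Suppose $G_P^*+G_D^*<G$. Then the feasible direction $x-x^*=(\varepsilon,0)$ gives $F(x^*)^\top(x-x^*)<0$, contradicting the variational inequality. Hence $G_P^*+G_D^*=G$.

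\textbf{Equal marginals.} I will assume an interior solution, $G_P^*,G_D^*>0$; otherwise the statement only holds as a complementarity inequality at the corner. The KKT system of the variational inequality with a single multiplier $\mu\ge 0$ reads
\[
\partial_{G_P}V_{\text{TTFT}}+\mu=0,\qquad \partial_{G_D}V_{\text{ITL}}+\mu=0.
\]
Subtracting the two equations yields \eqref{eq:pd-nash}.

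\textbf{Uniqueness, the main obstacle.} Because the constraint binds, I parametrize the face of $K$ by $G_P=t$ and $G_D=G-t$. Uniqueness then reduces to showing that
\[
\phi(t)=\partial_{G_P}V_{\text{TTFT}}(t)-\partial_{G_D}V_{\text{ITL}}(G-t,\lambda,t)
\]
has at most one zero on $[0,G]$. If $V_{\text{ITL}}$ did not depend on $G_P$, strict convexity would make the first term strictly increasing in $t$. It would also make $\partial_{G_D}V_{\text{ITL}}$ strictly decreasing in $t$, since its argument $G-t$ falls as $t$ rises. So $\phi$ would be strictly increasing and uniqueness would be immediate.

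The asymmetric coupling through $G_P$ adds the cross term $-\partial^2 V_{\text{ITL}}/\partial G_D\,\partial G_P$ to $\phi'(t)$. This term could break monotonicity, and the stated hypotheses alone do not control it. I would first try to show that $F$ is strictly monotone on $K$, i.e.\ that the symmetric part of its Jacobian is positive definite. This is Rosen's diagonal strict convexity condition, and it holds provided the cross-partial is dominated by the own second derivatives, for instance when
\[
\big|\partial^2_{G_DG_P}V_{\text{ITL}}\big| < 2\sqrt{\partial^2_{G_P}V_{\text{TTFT}}\;\partial^2_{G_D}V_{\text{ITL}}}.
\]
I would state this as an explicit mild regularity assumption, interpreted as the KV-transfer coupling being second-order, and then invoke Rosen's uniqueness theorem. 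Failing that, I would fall back on the one-dimensional monotonicity argument for $\phi$ under the same domination bound.
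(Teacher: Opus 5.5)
Your skeleton is the paper's: a common multiplier $\mu$ on $G_P+G_D\le G$ gives \eqref{eq:pd-nash} by elimination, and uniqueness is a single-crossing argument along the budget line. The real difference is in the uniqueness step, and there your caution is correct, not a shortfall. The paper's sketch asserts that ``each marginal is strictly monotone in $G_P$ with opposite signs'' by strict convexity. That holds only if $V_{\text{ITL}}$ ignores its third argument, yet Definition~\ref{def:pd-game} and Remark~\ref{rem:pd-efficiency} build that dependence in deliberately.

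With your $\phi$, we have $\phi'(t)=\partial^2_{G_P}V_{\text{TTFT}}+\partial^2_{G_D}V_{\text{ITL}}-\partial^2_{G_DG_P}V_{\text{ITL}}$, and the stated hypotheses do not sign it. A concrete failure:
\begin{itemize}
\item Take $V_{\text{ITL}}=e^{-G_D-kG_P}$ with $k>1$. It is strictly convex and decreasing in $G_D$, and also decreasing in $G_P$, which is the externality direction the paper intends.
\item It gives $\phi(t)=V'_{\text{TTFT}}(t)+e^{-G-(k-1)t}$. If $V'_{\text{TTFT}}$ is nearly flat except for one steep rise, $\phi$ crosses zero twice in $(0,G)$.
\item Every interior zero is a variational equilibrium. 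There $F(x^*)=(-\mu,-\mu)$ with $\mu>0$, so $F(x^*)^\top(x-x^*)=\mu\,(G-G_P-G_D)\ge 0$ for all $x=(G_P,G_D)\in K$.
\end{itemize}
So the proposition as stated genuinely needs an extra hypothesis, and yours does the job. For the one-dimensional route, the weaker condition $\partial^2_{G_DG_P}V_{\text{ITL}}<\partial^2_{G_P}V_{\text{TTFT}}+\partial^2_{G_D}V_{\text{ITL}}$ already suffices, and your Rosen-type bound implies it by AM--GM. That reduction to the face is legitimate precisely because you first showed the budget binds.

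Your version also repairs two smaller omissions in the paper's sketch. First, you get existence from the variational-inequality theorem; the paper instead appeals to the intermediate value theorem, which presupposes a sign change of $\phi$ at the endpoints. Second, you prove that $G_P^*+G_D^*=G$ rather than assuming it, and you correctly note that at a corner \eqref{eq:pd-nash} weakens to a complementarity condition. The paper silently assumes an interior solution there.
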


\begin{proof}[Proof sketch]
KKT conditions on the shared budget constraint with a common multiplier $\mu$ give $-\partial V_{\text{TTFT}}/\partial G_P = \mu = -\partial V_{\text{ITL}}/\partial G_D$; eliminating $\mu$ yields~\eqref{eq:pd-nash}.
Uniqueness follows from strict convexity via the intermediate value theorem on the one-dimensional constraint manifold $G_P + G_D = G$: each marginal is strictly monotone in $G_P$ with opposite signs, giving a unique crossing.
We adopt the variational equilibrium~\citep{rosen1965}, the standard selection for shared-constraint GNEPs~\citep{facchinei2007}.
\end{proof}

\begin{remark}[Equilibrium vs.\ social optimum]
\label{rem:pd-efficiency}
The variational equilibrium equalizes marginal violation rates but does not account for the positive externality that prefill provides to decode: more prefill GPUs produce KV cache entries faster, reducing $V_{\text{ITL}}$.
The social optimum minimizes $V_{\text{TTFT}}(G_P) + V_{\text{ITL}}(G_D, G_P)$ and satisfies $\frac{\partial V_{\text{TTFT}}}{\partial G_P} + \frac{\partial V_{\text{ITL}}}{\partial G_P} = \frac{\partial V_{\text{ITL}}}{\partial G_D}$, allocating more GPUs to prefill by an amount proportional to $|\partial V_{\text{ITL}} / \partial G_P|$.
This gap is small below saturation (when prefill throughput exceeds decode demand) but widens at saturation, contributing to the cascading dynamics analyzed in Section~\ref{sec:saturation}.
\end{remark}

\paragraph{Connection to Dynamo's Planner.}
The Planner's iterative adjustment ($\pm 1$ worker per 30\,s interval) is a best-response dynamic with inertia that converges to the variational equilibrium under stationary load.
Our experiments use fixed P/D splits, so we do not validate this convergence directly; the fixed allocations produce stable PoA regimes consistent with the equilibrium analysis.

\subsection{Game 2: KV Cache Placement}
\label{sec:game2}

The KVBM manages KV cache blocks across a four-tier hierarchy.
We model cache placement as a selfish caching game~\citep{chun2004} extended to hierarchical topologies.

\begin{definition}[KV Cache Placement Game]
\label{def:kv-game}
The \emph{KV cache placement game} $\Gamma_{\text{KV}}$ is defined as:
\begin{itemize}[nosep]
  \item \textbf{Players:} $N = \{1, \ldots, n\}$ (GPU workers, each managing a local cache).
  \item \textbf{Resources:} A set of KV cache blocks $B = \{b_1, \ldots, b_m\}$ associated with active sequences.
  \item \textbf{Strategy:} For each block $b \in B$, worker $i$ chooses a local tier $t_i(b) \in \{G1, G2, G3\}$, accesses a remote copy from another worker (case~4 in Equation~\ref{eq:kv-cost}), or does not cache the block.
  \item \textbf{Cost:} Worker $i$'s cost for accessing block $b$ is:
  \begin{equation}
  \label{eq:kv-cost}
    c_i(b) = \begin{cases}
      \alpha_{G1} & \text{if } b \text{ is in } i\text{'s HBM (fastest)} \\
      \alpha_{G2} & \text{if } b \text{ is in } i\text{'s CPU DRAM} \\
      \alpha_{G3} & \text{if } b \text{ is on } i\text{'s local SSD} \\
      d(i, j) + \alpha_{t_j(b)} & \text{if } b \text{ is cached by worker } j \neq i \text{ at tier } t_j(b) \\
      \gamma & \text{if } b \text{ is not cached by any worker (recomputation cost)}
    \end{cases}
  \end{equation}
  where $\alpha_{G1} < \alpha_{G2} < \alpha_{G3} < \gamma$ are tier access latencies and $d(i,j)$ is the network transfer cost between workers $i$ and $j$.
  \item \textbf{Capacity constraints:} Each tier on each worker has a fixed capacity: $K_i^{G1}$ blocks in HBM, $K_i^{G2}$ in DRAM, $K_i^{G3}$ on SSD. G4 (networked) is effectively unbounded but has the highest access cost.
\end{itemize}
\end{definition}

\begin{proposition}[Equilibrium structure of the KV cache game]
\label{prop:kv-equilibrium}
The KV cache placement game has the following properties:
\begin{enumerate}[nosep]
  \item Pure Nash equilibria exist (by the results of~\citet{chun2004} for selfish caching games).
  \item On complete-graph topologies (all workers equidistant, as approximated by NVLink within a node), $\text{PoA} = 1$: selfish caching is socially optimal.
  \item Under capacity constraints on HBM, the game becomes a \emph{capacitated selfish replication} game, which admits polynomial-time Nash equilibrium computation for hierarchical topologies.
\end{enumerate}
\end{proposition}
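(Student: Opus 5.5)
The plan is to prove the three parts in order, each by reducing $\Gamma_{\text{KV}}$ to a known model and checking that the extra features of Definition~\ref{def:kv-game} do not break the argument.

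\textbf{Part 1 (existence).} Because block costs in Eq.~\eqref{eq:kv-cost} are additive, the game decomposes across $B$: once capacities are set aside, worker $i$'s total cost is $\sum_{b} c_i(b)$. For each $b$, the per-block game is the selfish caching game of \citet{chun2004}. There are two adjustments.
\begin{itemize}[nosep]
  \item The placement cost $\alpha$ becomes a tier-dependent cost $\alpha_{t_i(b)}$.
  \item Remote access costs $d(i,j)+\alpha_{t_j(b)}$ instead of $d(i,j)$.
\end{itemize}
For each $b$, I would give an explicit greedy construction of a pure Nash equilibrium, following Chun et al.:
\begin{enumerate}[nosep]
  \item Process workers in order. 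A worker whose cheapest option is a local copy at some tier places one, at the best tier.
  \item Every other worker takes its cheapest existing copy, or recomputes at cost $\gamma$ if no copy is cheaper.
\end{enumerate}
I then verify that no worker gains by deviating. The argument of Chun et al.\ only uses that a worker's cost depends on whether it holds a copy and on its distance to the nearest copy, so replacing $d(i,j)$ by the effective distance $d(i,j)+\alpha_{t_j(b)}$ and $\alpha$ by $\alpha_{G1}$ preserves the structure. This holds once each holder sits at its best local tier, which is weakly dominant absent capacity limits. Taking the product of the per-block equilibria gives a pure Nash equilibrium of the uncapacitated game.

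\textbf{Part 2 ($\text{PoA}=1$ on complete graphs).} Set $d(i,j)=d$ for all $i\neq j$. For each block, the social cost of a profile with $k\ge 1$ copies at tier G1 is $k\alpha_{G1}+(n-k)(d+\alpha_{G1})$. With $k=0$ it is $n\gamma$.
\begin{itemize}[nosep]
  \item If $d+\alpha_{G1}\le \gamma$, the optimum is $k=1$. In any equilibrium some copy exists, since otherwise caching is a profitable deviation. No second holder wants to keep its copy when remote access is no costlier, so equilibria have $k=1$ up to ties, and these match the optimum.
  \item Otherwise, every worker caches locally both at equilibrium and at the optimum.
\end{itemize}
Handling ties carefully, and noting that the paper's model charges $\alpha_{G1}$ per holder rather than a one-off placement cost, I would show that every equilibrium per block attains optimal social cost. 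Summing over blocks then gives $\text{PoA}=1$.

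\textbf{Part 3 (capacities).} With HBM capacity $K_i^{G1}$ binding, the blocks no longer decouple. I would map the game onto the capacitated selfish replication model: workers are nodes with cache slots, blocks are objects, and access cost is given by the effective distances above. I would then cite the known polynomial-time equilibrium algorithm for hierarchical (ultrametric) topologies. The G1/G2/G3 tiers can be folded in by treating each worker's tiers as nested nodes at distances $\alpha_{G2}-\alpha_{G1}$ and $\alpha_{G3}-\alpha_{G1}$, which keeps the metric hierarchical given the cluster's node/rack structure.

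\textbf{Main obstacle.} Part 3 is the main obstacle. Checking that this folded metric is genuinely an ultrametric is delicate: the additive term $\alpha_{t_j(b)}$ makes effective distances asymmetric and breaks the ultrametric inequality in general. My first attempt would be to restrict to the regime in which $d(i,j)$ dominates the tier gaps, so that the ultrametric inequality holds, and to state the result under that hypothesis. A secondary subtlety is the tie-breaking in Part 2, which may force the claim to be read as \emph{some} optimal equilibrium or as costs being equal, rather than holding for all equilibria.
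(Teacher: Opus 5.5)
\paragraph{Part 2.} The paper gives no argument beyond appealing to \citet{chun2004} and to the capacitated-selfish-replication literature, so the question is whether your reductions go through. Two of them do not as written.

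Your Part~2 computation contradicts the social cost you wrote down. In Eq.~\eqref{eq:kv-cost} the $\alpha$'s are access latencies paid by \emph{every} reader, not a placement cost paid only by holders. A holder pays $\alpha_{G1}$, and a remote reader pays $d+\alpha_{G1}\ge\alpha_{G1}$. Hence $k\alpha_{G1}+(n-k)(d+\alpha_{G1})=n\alpha_{G1}+(n-k)d$ is decreasing in $k$, and the optimum is $k=n$, not $k=1$. A holder also never drops its copy, since that raises its cost by $d$, so equilibria do not have $k=1$ either. The placement-versus-distance tradeoff of Chun et al.\ that drives your case split does not exist in this model.

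The correct argument is simpler. Absent capacities, local G1 storage is weakly dominant for every worker and block, because $\alpha_{G1}<\alpha_{G2}<\alpha_{G3}<\gamma$ and $\alpha_{G1}\le d(i,j)+\alpha_{t_j(b)}$. So the all-G1 profile is a pure equilibrium, which gives Part~1 in one line. Moreover, $\alpha_{G1}$ is a lower bound on every individual cost, and in \emph{any} equilibrium every worker pays exactly $\alpha_{G1}$ per block, since anyone paying more could switch to local G1. Hence $\text{PoA}=1$ on every topology and for every equilibrium. No tie-breaking caveat is needed, and the complete-graph hypothesis plays no role.

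\paragraph{Part 3.} Restricting to large $d$ cannot make your tier-folded metric ultrametric. Let $\delta=\alpha_{G2}-\alpha_{G1}>0$. The points $i$, $(j,G1)$, $(j,G2)$ are at folded distances $d(i,j)$, $\delta$ and $d(i,j)+\delta$, and $d(i,j)+\delta>\max\{d(i,j),\delta\}$ for all positive values. Additive path lengths give a tree metric, never an ultrametric. Folding also gives each worker several caches, which is not the one-cache-per-player structure of capacitated selfish replication.

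A route that works follows the statement's wording, with only HBM capacitated:
\begin{itemize}
  \item Own-G2 storage is always available at $\alpha_{G2}$.
  \item Any remote G2/G3 copy costs at least $d(i,j)+\alpha_{G2}\ge\alpha_{G2}$, so no worker ever uses another worker's lower-tier copy.
  \item Given the G1 contents, worker $i$'s best cost for $b$ is $\alpha_{G1}+\min\{c,\min_{j\in H(b)}d(i,j)\}$. Here $c=\alpha_{G2}-\alpha_{G1}$, $H(b)$ is the set of workers holding $b$ in HBM, and $d(i,i)=0$.
\end{itemize}
This is exactly a capacitated selfish replication game on $D=\min\{d,c\}$ with uniform miss cost $c$. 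Capping preserves the ultrametric inequality, since $\min\{d(i,k),c\}\le\max\{\min\{d(i,j),c\},\min\{d(j,k),c\}\}$. The two-level NVLink/InfiniBand distance (intra-node $\le$ inter-node) is itself an ultrametric. So the hierarchical-network algorithm applies directly, with no regime restriction.
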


\begin{remark}[Conjectured PoA on hierarchical topologies]
\label{rem:kv-hierarchical-conjecture}
On hierarchical topologies (NVLink within node, InfiniBand across nodes), we conjecture that the PoA scales sublinearly with worker count.
The closest classical analog is \citet{chun2004}'s $O(\sqrt{n})$ bound for line-topology selfish replication under uniform local cost; our tier-heterogeneous cost model (Equation~\ref{eq:kv-cost}) relaxes the uniform-cost assumption, and a tight bound for this setting is not established.
In the homogeneous-request regime (common model prefixes), we expect the PoA to approach 1.
\end{remark}

\paragraph{Stackelberg structure.}
The KVBM's four-tier hierarchy naturally induces a Stackelberg game: the HBM tier (fastest, most constrained) acts as a ``leader'' by setting eviction policies, and lower tiers respond optimally.
The KVBM's frequency-based eviction (frequency initialized at 1, doubled on hit, decremented on decay) approximates a threshold strategy: blocks are promoted when their frequency exceeds a threshold determined by the tier's current occupancy.
This greedy local decision mirrors the equilibrium structure of selfish caching games~\citep{chun2004}, where each node caches content that minimizes its local access cost.

\paragraph{The cache paradox.}
\citet{ma2021} proved a Braess-like result for selfish caching on directed graphs: adding cache nodes can worsen the Price of Anarchy.
This carries a concrete warning for GPU inference: naively adding GPU memory to a KV cache pool could theoretically degrade system-wide performance under selfish routing, because the additional capacity changes the equilibrium placement in ways that increase total access cost.
However, when cache request patterns are homogeneous, as they often are for popular model prefixes, the paradox is bounded.

\subsection{Game 3: Request Routing as a Congestion Game}
\label{sec:game3}

The Smart Router's per-request worker selection maps directly to a congestion game, but with an important structural enrichment: KV cache overlap creates positive externalities that break the standard congestion game framework.

\begin{definition}[Request Routing Game]
\label{def:routing-game}
The \emph{request routing game} $\Gamma_{\text{R}}$ is defined as:
\begin{itemize}[nosep]
  \item \textbf{Players:} A set of concurrent requests $Q = \{q_1, \ldots, q_m\}$.
  \item \textbf{Resources:} A set of GPU workers $W = \{w_1, \ldots, w_n\}$.
  \item \textbf{Strategy:} Each request $q_i$ selects a worker $w_j \in W$.
  \item \textbf{Cost:} The cost to request $q_i$ of being assigned to worker $w_j$ under profile $\sigma$ is:
  \begin{equation}
  \label{eq:routing-cost}
    C_i(\sigma) = \underbrace{f_j(n_j(\sigma))}_{\text{congestion cost}} - \underbrace{\omega \cdot o_{ij}}_{\text{cache externality benefit}}
  \end{equation}
  where $f_j(n_j)$ is the latency function of worker $j$ as a function of the number of requests $n_j$ assigned to it, $o_{ij} \in [0, 1]$ is the KV cache overlap score between request $q_i$ and worker $w_j$'s cached blocks, and $\omega \geq 0$ is the overlap weight (in latency units).
Note the sign convention: in Equation~\ref{eq:routing-cost}, higher $\omega$ increases the \emph{benefit} of cache overlap (reducing cost), whereas \dynamo's router cost function uses $\omega$ to weight the number of blocks \emph{requiring} prefill (increasing cost for cache misses).
These are mathematically equivalent for the $\arg\min$ worker selection (both favor cache-warm workers), but the sign is inverted relative to the implementation.
\end{itemize}
\end{definition}

\paragraph{Centralized routing as best-response dynamics.}
Although \dynamo's Smart Router is centralized, its sequential greedy assignment, selecting the lowest-cost worker for each arriving request, is equivalent to best-response dynamics in the congestion game~\citep{roughgarden2015}.
The resulting allocation is a Nash equilibrium of the sequential game (or an approximate NE when $\tau > 0$), and the PoA measures how far it falls from the globally optimal batch assignment.

\paragraph{Classical reference bounds.}
For the idealized case of atomic unsplittable routing with affine latency functions and $\omega = 0$, the PoA is bounded by $5/2$~\citep{christodoulou2005}.
In the nonatomic (Wardrop) limit, the tighter bound $\text{PoA} \leq 4/3$ applies~\citep{roughgarden2002}; large atomic games converge to this limit~\citep{cominetti2023}.
These bounds serve as \emph{reference values} for our system but do not directly apply: our latency functions include singular terms near saturation (Equation~\ref{eq:latency-superlinear}, defined in Section~\ref{sec:saturation}; the key property is a pole at capacity that makes latency grow faster than any polynomial), KV cache externalities make costs player-specific (violating the anonymous-cost assumption), and the system has only 2--5 decode workers (making worst-case bounds loose for small $n$).

\paragraph{Positive externalities break the potential game structure.}
The KV overlap term $\omega \cdot o_{ij}$ makes costs depend on request \emph{identity} (prefix overlap), not just co-user count, violating the anonymous-cost assumption~\citep{milchtaich1996}.
Finding social optima is NP-hard, though 2-approximations exist~\citep{dekeijzer2012}.
When $\omega = 0$, the game is a pure congestion game with classical guarantees; when $\omega > 0$, the equilibrium shifts between cache-affinity (low TTFT, high congestion) and load-balanced (low ITL, more cache misses).

\begin{proposition}[Routing game equilibria under cache externalities]
\label{prop:routing-equilibrium}
The request routing game $\Gamma_{\text{R}}$ with cache externalities has the following properties:
\begin{enumerate}[nosep]
  \item When $\omega = 0$, $\Gamma_{\text{R}}$ is an exact potential game. Pure Nash equilibria exist, and classical bounds apply: $\text{PoA} \leq 5/2$ for affine $f_j$ (atomic)~\citep{christodoulou2005}, tightening to $\leq 4/3$ in the nonatomic limit~\citep{roughgarden2002, cominetti2023}. For non-affine costs (e.g., the singular functions in Equation~\ref{eq:latency-superlinear}), these bounds do not hold.
  \item When $\omega > 0$ and overlap scores are heterogeneous across requests, $\Gamma_{\text{R}}$ is not a potential game in general~\citep{milchtaich1996}. However, pure Nash equilibria plausibly exist when the set of distinct prefix groups is small relative to the number of workers (each prefix group forms an internal congestion game); a formal proof of existence under general prefix-group structures remains open.
  \item Each value of $(\omega, \tau)$ induces a different equilibrium, tracing a curve on the Pareto frontier in $(\text{TTFT}, \text{ITL}, \text{throughput})$ space.
\end{enumerate}
\end{proposition}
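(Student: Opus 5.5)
The plan is to treat the three items separately, since each rests on a different standard tool.

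For item~1, set $\omega = 0$. Then $C_i(\sigma) = f_j(n_j(\sigma))$ depends only on the chosen worker and its load, so $\Gamma_{\text{R}}$ is a singleton congestion game in the sense of~\citet{rosenthal1973}. I will exhibit Rosenthal's potential
\[
  \Phi(\sigma) = \sum_{j} \sum_{k=1}^{n_j(\sigma)} f_j(k).
\]
The key check is that a unilateral move of $q_i$ from $w_j$ to $w_{j'}$ changes $\Phi$ by exactly $f_{j'}(n_{j'}+1) - f_j(n_j)$. This difference equals the change in $q_i$'s own cost, so $\Phi$ is an exact potential~\citep{monderer1996}. Since the strategy space is finite, a minimizer of $\Phi$ exists, and that minimizer is a pure Nash equilibrium. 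The same argument shows that best-response dynamics terminate, which supports the equivalence with the router used earlier.

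The PoA bounds in item~1 are not reproved. For affine $f_j$ I invoke $5/2$ from~\citet{christodoulou2005} (atomic) and $4/3$ from~\citet{roughgarden2002} with~\citet{cominetti2023} (nonatomic limit). The caveat about singular costs needs only an example. Take two workers, one with a pole-type latency as in Equation~\ref{eq:latency-superlinear} near capacity and one with constant latency. Tuning the constant so that selfish play just tips onto the congested worker makes the equilibrium-to-optimum ratio unbounded. This shows that no constant bound survives.

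For item~2, the non-potential claim only needs a counterexample. I will build a small instance, say three requests with heterogeneous overlaps $o_{ij}$ on two workers, where cyclic best responses occur. Milchtaich's player-specific construction~\citep{milchtaich1996} can be adapted by encoding the player-specific offsets as $-\omega o_{ij}$. An improvement cycle rules out any (even ordinal) potential.

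The existence claim for few prefix groups is stated as plausible and open, so I will give only a heuristic. If every request in group $g$ has the same overlap vector, then within a group the costs are $f_j(n_j) - \omega o_{gj}$. This is a congestion game with worker-specific constants, which is still potential given the other groups' loads. I would then try to argue existence by alternating group best responses. I expect this to be the main obstacle: the inter-group interaction is through shared loads with group-specific offsets, which is exactly Milchtaich's setting. His existence result covers singleton games with player-specific nondecreasing costs, and I would try to invoke it directly. That suggests existence might in fact hold for singleton routing, but to match the statement I will present it as supporting evidence rather than a proof.

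For item~3, the content is structural rather than a theorem. I will note that the equilibrium or softmax allocation is a function of $(\omega,\tau)$. Mapping each such allocation to its (TTFT, ITL, throughput) vector gives a parametrized curve. I will observe that increasing $\omega$ shifts the selection toward cache-warm workers, trading congestion for lower prefill cost, so the curve moves monotonically along that trade-off. I would not claim Pareto optimality of every point: the curve lies on the frontier only in the achievable-set sense, so I will phrase the result accordingly.
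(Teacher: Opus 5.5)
\textbf{The gap is in item~2.} The improvement cycle you plan to build cannot exist, for any number of requests or workers. In Definition~\ref{def:routing-game} the overlap $o_{ij}$ is fixed data, independent of the profile $\sigma$. So $-\omega o_{ij}$ is a load-independent additive offset, and your own Rosenthal argument from item~1 goes through verbatim with it. Write $\sigma_i$ for the worker chosen by $q_i$ and set
\[
  \Psi(\sigma) \;=\; \sum_{j}\sum_{k=1}^{n_j(\sigma)} f_j(k) \;-\; \omega \sum_{i} o_{i\sigma_i}.
\]
A unilateral move of $q_i$ from $w_j$ to $w_{j'}$ changes $\Psi$ by $f_{j'}(n_{j'}+1)-f_j(n_j)-\omega\,(o_{ij'}-o_{ij})$, with loads measured before the move. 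This is exactly the change in $C_i$.

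Hence $\Gamma_{\text{R}}$ is an exact potential game for every $\omega\ge 0$ and every overlap matrix, and every improvement path is finite. Offsets of the form $-\omega o_{ij}$ cannot reproduce the cycles of \citet{milchtaich1996}. Those cycles need player-specific cost \emph{functions} $f_{ij}(n_j)$ whose differences across players change with load.

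You already made this observation inside a single prefix group. Letting each request be its own group gives the global potential. So the obstacle you anticipate for existence is not there either: pure equilibria exist for every prefix-group structure. Milchtaich's existence theorem also gives this when $f_j$ is nondecreasing, but the potential argument does not even need monotonicity.

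\textbf{What this means for the statement.} The paper offers no proof beyond the citation. The paragraph before the proposition argues only that overlap makes costs identity-dependent, ``violating the anonymous-cost assumption,'' and anonymity failure alone does not destroy a potential. Consequently, the first sentence of item~2 is false for $\Gamma_{\text{R}}$ as defined (with $o_{ij}$ independent of $\sigma$). Its ``remains open'' clause is settled positively by $\Psi$.

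To get a genuine non-potential instance you must change the model, so that the cache benefit depends on the profile asymmetrically or interacts with load in a player-specific way. For example, take two workers with $f_j(1)=0$, $f_j(2)=1$, and let $q_1$ alone receive a bonus of $2$ when co-located with $q_2$. Then the four profiles form an improvement cycle, and there is no pure equilibrium at all.

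\textbf{Two smaller points.}
\begin{itemize}
  \item \emph{Item~1.} A constant-cost worker caps every equilibrium cost at that constant. So your Pigou-type instance has equilibrium-to-optimum ratio at most $m$ for $m$ atomic requests. It is unbounded only as $m\to\infty$ or in the nonatomic limit. For the caveat it suffices that the ratio exceeds $5/2$ once $m\ge 3$.
  \item \emph{Item~3.} Drop the monotonicity claim. Equilibria need not be unique, and for $\tau>0$ the outcome is a logit allocation rather than a Nash equilibrium. The paper's own below-saturation sweep (Section~\ref{sec:res-pareto-below}) shows no systematic dependence on $\omega$. Item~3 can only be offered as a modeling statement, which is how the paper treats it.
\end{itemize}
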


\begin{remark}[Empirical $\widehat{\text{PoA}}$ and its relationship to classical bounds]
\label{rem:poa-omega}
Our estimated index $\widehat{\text{PoA}} \approx 7.5$ at $C=64$ (70B, 2 workers), and $\approx 19$ on the 340B, lies 3--7$\times$ above the classical affine bounds: $5/2$ for unweighted atomic congestion games~\citep{christodoulou2005} and $(3+\sqrt{5})/2 \approx 2.618$ for weighted atomic congestion games~\citep{awerbuch2005}.
Since our workload uses identical requests (same input length, deterministic generation), the unweighted bound $5/2$ is the directly applicable reference; the weighted bound $(3+\sqrt{5})/2$ would apply if requests imposed materially different loads (e.g., variable sequence lengths).
The index is invariant to $\omega$ at this concurrency level (Experiment~4a confirms $\widehat{\text{PoA}} = 7.47 \pm 0.08$ across all $(\tau, \omega)$ combinations including $\omega = 0$), so the comparison to classical bounds holds regardless of whether the measurement uses $\omega = 0$ or $\omega = 1$.
The gap reflects the uncalibrated Hungarian denominator and non-affine latency at our operating points, so classical values serve as reference points rather than constraints (Section~\ref{sec:ctrl-implementation}).
The critical result is that the index is \emph{stable} below saturation and \emph{grows rapidly} at saturation (Proposition~\ref{prop:poa-divergence}): this regime transition drives the controller design in Section~\ref{sec:controller}.
\end{remark}

\paragraph{The role of router temperature.}
The temperature parameter $\tau$ in Equation~\ref{eq:router-softmax} controls the exploration-exploitation tradeoff:
\begin{itemize}[nosep]
  \item $\tau = 0$: Deterministic, greedy selection ($\arg\min$ of cost; the softmax in Equation~\ref{eq:router-softmax} is defined in the limit $\tau \to 0$). Converges quickly to a local equilibrium but is susceptible to herding (all requests pile onto the same cache-warm worker).
  \item $\tau \to \infty$: Uniform random selection. No congestion hotspots, but no cache exploitation.
  \item Intermediate $\tau$: Softmax sampling balances cache affinity and load distribution.
\end{itemize}
A recent result on game-theoretic load balancing~\citep{fardno2025loadbalancing} proved that static congestion games converge to Nash equilibrium in at most $n$ iterations of best-response dynamics, where $n$ is the number of players.
This is fast enough for practical routing when the player count (concurrent requests) is moderate, suggesting that \dynamo's greedy routing may achieve near-equilibrium allocation within a few scheduling cycles.

\subsection{Coupling Between Games}
\label{sec:coupling}

The three games are not independent.
Their coupling creates feedback loops that can amplify or dampen inefficiencies.

The three games are coupled through shared state (Figure~\ref{fig:coupled-games}):
\begin{itemize}[nosep]
  \item \textbf{Planner $\to$ Router:} GPU allocation changes the routing game's worker set and congestion landscape.
  \item \textbf{Router $\to$ Cache:} Request placement determines which KV blocks are created on which workers; cache-affinity routing concentrates blocks, increasing eviction pressure.
  \item \textbf{Cache $\to$ Router:} Cache evictions change overlap scores, redirecting future requests.
  \item \textbf{Cascading saturation:} When the prefill pool saturates, decode workers idle, the Planner shifts GPUs to prefill, changing the routing game's worker set, a feedback loop analyzed in the next section.
\end{itemize}

\begin{figure}[t]
\centering
\begin{tikzpicture}[
  every node/.style={font=\small, inner sep=2pt},
  game/.style={densely dashed, line width=0.45pt,
               minimum width=4.2cm, minimum height=2.0cm, align=center},
  arrow/.style={-{Stealth[length=1.8mm, width=1.5mm]}, line width=0.55pt},
  elabel/.style={font=\scriptsize\itshape, align=center, inner sep=1.5pt, color=black!55}
]
  % Three games (colors match Figure \ref{fig:three-games})
  \node[game, draw=gOne,   fill=gOne!7]   (g1) at (0, 0)
       {\textbf{Game 1:} $\Gamma_{\mathrm{PD}}$\\[1pt]
        Prefill--Decode\\Resource Game\\[1pt]
        {\scriptsize\itshape\color{black!55}Planner\,$\cdot$\,$\sim$30\,s}};
  \node[game, draw=gThree, fill=gThree!7] (g3) at (7, 0)
       {\textbf{Game 3:} $\Gamma_{\mathrm{R}}$\\[1pt]
        Request Routing\\Congestion Game\\[1pt]
        {\scriptsize\itshape\color{black!55}Router\,$\cdot$\,$<$1\,ms}};
  \node[game, draw=gTwo,   fill=gTwo!7]   (g2) at (3.5, -3.5)
       {\textbf{Game 2:} $\Gamma_{\mathrm{KV}}$\\[1pt]
        KV Cache\\Placement Game\\[1pt]
        {\scriptsize\itshape\color{black!55}KVBM\,$\cdot$\,continuous}};

  % Coupling arrows (colored by originating game)
  \draw[arrow, gOne] (g1.east) -- (g3.west)
    node[elabel, midway, above] {GPU allocation\\changes worker set};
  \draw[arrow, gThree] (g3.south west) -- (g2.north east)
    node[elabel, midway, right, xshift=2mm] {Routing determines\\cache creation};
  \draw[arrow, gTwo] (g2.north west) -- (g1.south)
    node[elabel, midway, left, xshift=-2mm] {Cache pressure\\triggers scaling};
  \draw[arrow, gTwo, densely dashed] (g2.north east) to[out=30, in=-100] (g3.south east);
  \node[elabel, anchor=west] at (7.2, -2.8) {Cache state changes\\overlap scores};
\end{tikzpicture}
\caption{The three coupled games in disaggregated inference, colored to match Figure~\ref{fig:three-games}. Solid arrows denote direct coupling through shared state and are colored by the game that originates the coupling; the dashed arrow is the feedback loop where cache state affects routing decisions. The games operate at different timescales: the Planner adjusts every $\sim$30 seconds, the router makes per-request decisions in under 1\,ms, and the KVBM operates continuously.}
\label{fig:coupled-games}
\end{figure}

\section{Saturation Dynamics and Regime Transitions}
\label{sec:saturation}

The properties in Section~\ref{sec:formalization} assume ``well-behaved'' (linear or mildly convex) latency functions.
GPU inference exhibits sharp nonlinearities at saturation that alter the game's structure, creating regime transitions with consequences for the Price of Anarchy.
For any fixed polynomial degree $d$, the PoA is finite~\citep{aland2011}; the growth is continuous and monotone, not a discontinuity in the strict physical sense.
We use ``regime transition'' to describe the rapid, practically significant degradation that occurs as the effective degree of the latency function increases near saturation.

\subsection{Nonlinear Payoff Shifts at Saturation}
\label{sec:sat-nonlinear}

Below saturation, adding requests to a GPU has near-zero marginal latency cost: the GPU has idle compute and memory bandwidth, and batching amortizes fixed overhead.
The latency function in this regime is approximately linear:
\begin{equation}
\label{eq:latency-linear}
  f_j(n_j) \approx a_j \cdot n_j + b_j, \quad n_j \ll n_j^{\text{sat}}
\end{equation}
where $n_j^{\text{sat}}$ is worker $j$'s saturation point and $a_j, b_j$ are constants determined by model size and hardware.
In this regime, the routing congestion game is well-behaved and the PoA is bounded (Remark~\ref{rem:poa-omega}).
The key property is that the PoA is \emph{stable} below saturation; it does not grow with load.

Above saturation, the latency function becomes superlinear.
Three mechanisms drive this:

\paragraph{HBM capacity walls.}
When a worker's KV cache fills HBM, eviction to lower tiers causes a discontinuous latency jump.

\paragraph{Batch size degradation.}
Beyond the optimal batch size, increasing concurrency causes throughput to plateau while latency grows, the classic throughput-latency knee.
The latency function in this regime is well-approximated by a convex function:
\begin{equation}
\label{eq:latency-superlinear}
  f_j(n_j) \approx a_j \cdot n_j + b_j + \frac{d_j}{(n_j^{\text{sat}} - n_j)^{\beta}}, \quad n_j \to n_j^{\text{sat}}
\end{equation}
where $\beta > 0$ controls the severity of the saturation penalty and $d_j$ is a hardware-dependent constant.
The singular term $d_j / (n_j^{\text{sat}} - n_j)^\beta$ resembles queuing delay functions from traffic engineering; unlike the standard BPR (Bureau of Public Roads) polynomial, it has a pole at capacity, which drives the PoA divergence analyzed below.

\paragraph{Queuing cascades.}
When all workers are saturated, requests queue at the frontend.
Queuing latency grows as $1/(\mu - \lambda)$ as arrival rate approaches capacity.

\begin{proposition}[PoA growth at saturation]
\label{prop:poa-divergence}
For the request routing game $\Gamma_{\text{R}}$ with latency functions of the form in Equation~\ref{eq:latency-superlinear}:
\begin{enumerate}[nosep]
  \item When $n_j < n_j^{\text{sat}}$ for all workers $j$ (below saturation), the empirical PoA is stable: it does not grow with load (Remark~\ref{rem:poa-omega}).
  \item When $n_j \to n_j^{\text{sat}}$ for any worker $j$ (at saturation), the PoA grows rapidly.
  Two mechanisms contribute: (a)~\emph{routing inefficiency}, where the singular latency function amplifies even small load imbalances, and (b)~\emph{resource allocation failure}, where the P/D split (Game~1) becomes the binding constraint, and no per-request routing decision can compensate for insufficient capacity.
  In disaggregated serving, the latter dominates when the prefill pool saturates first (Section~\ref{sec:sat-coupled}).
  \item The transition from stable to rapidly growing PoA is detectable via the second derivative of aggregate latency with respect to load: $\frac{d^2 \bar{L}}{d\lambda^2} > \theta$ signals the onset of saturation, where $\theta$ is a system-dependent threshold.
\end{enumerate}
\end{proposition}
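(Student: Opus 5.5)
The three parts are handled separately, because they are claims of different kinds. Part~1 is partly empirical: it says the \emph{estimated} index does not grow with load. For it, I prove a theoretical backbone and then lean on Remark~\ref{rem:poa-omega} for the measured stability. Assume every worker stays in a window $n_j \le (1-\delta)n_j^{\text{sat}}$ with $\delta>0$ fixed. On this window the singular term of Equation~\ref{eq:latency-superlinear} is bounded by $d_j/(\delta n_j^{\text{sat}})^\beta$. Its derivatives are bounded in the same way, so $f_j$ is sandwiched between affine functions. Concretely, $f_j$ has bounded elasticity $n f_j'(n)/f_j(n)\le \kappa(\delta)$. I then invoke the smoothness framework behind the bounds already cited (\citet{roughgarden2015}; degree-$d$ polynomial bounds of \citet{aland2011}). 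A $(\lambda,\mu)$-smoothness argument for cost functions with elasticity at most $\kappa$ gives a PoA bound that depends only on $\kappa(\delta)$, not on the number of requests $m$. This yields ``bounded and load-independent below saturation.'' For $\omega>0$ the overlap term is bounded by $\omega$ in absolute value, so it shifts the bound by only an additive $O(\omega m)$ in social cost. The ratio stays bounded as long as the social optimum grows at least linearly in $m$, which holds because $b_j>0$.

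For part~2(a), I use a two-worker lower-bound construction. Take identical workers and load $m$ with $m/2$ close to $n^{\text{sat}}$. Exhibit a Nash profile under the atomic, sequential best-response of Section~\ref{sec:game3} that is off balance by one request, $(m/2+1, m/2-1)$. Such a profile is stable when $\omega>0$ and the cache-warm worker's overlap benefit $\omega$ exceeds the congestion difference before the pole dominates. The social cost ratio against the balanced optimum then contains the term $(n^{\text{sat}}-m/2-1)^{-\beta}/(n^{\text{sat}}-m/2)^{-\beta}$. This term is unbounded as the gap shrinks to one. So the PoA has no uniform bound, and I would state this as ``grows without bound as $n_j\to n_j^{\text{sat}}$.''

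For 2(b), the argument is a capacity one. If aggregate prefill demand exceeds $G_P$ times the per-GPU prefill rate, every assignment, including the optimal one, has a queue. The routing PoA is then taken relative to an optimum that is itself degraded. I would argue that the gap between the static $(G_P,G_D)$ split and the Remark~\ref{rem:pd-efficiency} social optimum enters multiplicatively, and that no choice of $\sigma$ removes it.

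Part~3 is a detection criterion, not a theorem. I would prove only its justification. Differentiate $\bar L(\lambda)$ built from Equation~\ref{eq:latency-superlinear}, together with the $1/(\mu-\lambda)$ queuing term. In the linear regime of Equation~\ref{eq:latency-linear}, $\bar L''$ is $O(d_j (n^{\text{sat}})^{-\beta-2})$, which is small. Near saturation it scales as $\beta(\beta+1)d_j(n^{\text{sat}}-n)^{-\beta-2}$, which is monotone and unbounded. Any $\theta$ strictly between the sub-saturation supremum and infinity is therefore crossed exactly once, at a point that moves toward $n^{\text{sat}}$ as $\theta$ grows.

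I expect the main obstacle to be part~2(a), specifically producing a genuine Nash equilibrium, rather than merely a greedy outcome, that is imbalanced near the pole. With $\omega=0$ and identical workers, best response tends to balance loads, which would kill the lower bound. My first attempt would use heterogeneous $n_j^{\text{sat}}$ or $d_j$ across workers, in the style of Pigou, so that selfish requests ignore the steep marginal externality $n f_j'(n)$ they impose. Only if that fails would I use the cache-overlap term to pin the imbalance.
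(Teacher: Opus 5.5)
\textbf{Parts 1 and 3 are fine.} Your part 3 is the paper's argument: differentiate the singular term and threshold its divergent second derivative. Your bounded-elasticity smoothness bound on a window $n_j\le(1-\delta)n_j^{\text{sat}}$ is a stronger justification of (i) than the paper's remark that the singular term is bounded. For $\omega>0$ you need $\min_j b_j>\omega$, or a latency-only social cost, to get the linear lower bound on the optimum.

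\textbf{The gap is in 2(a).} The cache-pinned imbalance on identical workers cannot produce divergence; it produces the opposite. For $(m/2+1,\,m/2-1)$ to be an equilibrium, every request on the heavy worker needs $f(m/2+1)-f(m/2)\le\omega(o_{ih}-o_{il})\le\omega$. With $g=n^{\text{sat}}-m/2$, the left side is at least $d[(g-1)^{-\beta}-g^{-\beta}]$. This diverges as $g\to1^{+}$, which is exactly where your factor $(g/(g-1))^{\beta}$ becomes large. So the profile stops being an equilibrium before the ratio grows.

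Quantitatively, the equilibrium condition caps the social cost at $mf(m/2)+(m/2+1)\omega$. Convexity of $nf(n)$ gives $\mathrm{OPT}\ge mf(m/2)-m\omega$. The ratio is therefore at most $(f(m/2)+\omega)/(f(m/2)-\omega)$, which tends to $1$ as the pole is approached. The same computation covers every equilibrium on two identical workers. With homogeneous workers and fixed $\omega$, the pole makes selfish routing \emph{more} efficient relative to the optimum. This branch should be dropped, not kept as a fallback.

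\textbf{The Pigou-style route is the one that is needed, and it works.} Take the following instance:
\begin{itemize}
\item worker 1 with constant latency $b_1=d$ ($a_1=d_1=0$, or $d_1$ tiny with $n_1^{\text{sat}}$ huge);
\item worker 2 with $f_2(n)=d/(N-n)$ ($\beta=1$, $a_2=b_2=0$);
\item $m=N-1$ requests, with $N$ a perfect square.
\end{itemize}
Routing everything to worker 2 is an equilibrium: each request pays $d$, and deviating also pays $d$. Its cost is $(N-1)d$. Placing $N-\sqrt N$ requests on worker 2 and $\sqrt N-1$ on worker 1 costs $2(\sqrt N-1)d$. Hence $\mathrm{PoA}\ge(\sqrt N+1)/2$, while $n_2/n_2^{\text{sat}}=1-1/N\to1$.

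This is the atomic form of the M/M/1 anarchy value. It is the mechanism behind the queueing results of \citet{gilboafreedman2014} and \citet{havivroughgarden2007} that the paper's sketch cites. The paper builds no instance. It also asserts that even near-balanced equilibria on identical machines diverge against an optimum that ``leaves headroom.'' With a fixed request count on identical workers that cannot hold, since the optimum is itself the balanced assignment. Here your diagnosis is sharper than the paper's.

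\textbf{A smaller issue in 2(b).} A prefill queue shared by the equilibrium and the benchmark adds the same delay to numerator and denominator, which pushes the ratio toward $1$. Growth requires a benchmark that does not carry that queue. Candidates are the joint split-and-routing optimum of Remark~\ref{rem:pd-efficiency}, or, for $\widehat{\text{PoA}}$, the parametric Hungarian denominator. Your ``enters multiplicatively'' step must name that benchmark. The paper itself treats (b) only interpretively.
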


\begin{proof}[Proof sketch]
\emph{(i)} When $n_j < n_j^{\text{sat}}$ for all $j$, the singular term $d_j / (n_j^{\text{sat}} - n_j)^\beta$ is bounded and the latency function is dominated by its linear component $a_j n_j + b_j$.
The PoA is stable across load levels (Remark~\ref{rem:poa-omega}).

\emph{(ii)} The latency function in Equation~\ref{eq:latency-superlinear} has a singularity at $n_j = n_j^{\text{sat}}$, making it superpolynomial near capacity.
Direct queueing-theoretic PoA results apply to this regime: \citet{gilboafreedman2014} proved that the price of anarchy in the Markovian single-server queue diverges as $\rho \to 1$, and \citet{havivroughgarden2007} extended this to exponential multi-server stations, showing $\text{PoA} \leq m$ below capacity with unbounded growth at the pole.
For polynomial latency functions of degree $d$, a separate bound $\Theta(d^d)$ holds~\citep{aland2011}; this does not directly cover our singular cost but captures the qualitative behavior of increasing-degree polynomial approximations.
The latency function has a pole at capacity, and for any finite PoA bound $B$, there exists a load level close enough to saturation where the actual PoA exceeds $B$.
The singularity drives arbitrarily large cost ratios between near-balanced NE allocations and the social optimum that leaves headroom below the pole, matching the queueing-theoretic $\rho \to 1$ divergence.
The practical conclusion is that as load approaches capacity, the PoA grows faster than any fixed polynomial bound, which our experiments confirm (Section~\ref{sec:res-equilibrium}).
Note that on identical machines with increasing cost functions, all Nash equilibria are approximately balanced (loads differ by at most 1).
The divergence arises not from load imbalance across workers, but from the \emph{sensitivity} of the singular latency function: even a near-balanced NE incurs cost disproportionately higher than the social optimum that leaves headroom below the singularity.

\emph{(iii)} The saturation signal follows from differentiating the latency function: $f_j''(n_j) = \beta(\beta+1) d_j / (n_j^{\text{sat}} - n_j)^{\beta+2}$, which diverges as $n_j \to n_j^{\text{sat}}$.
The aggregate second derivative $d^2\bar{L}/d\lambda^2$ inherits this divergence, and a finite threshold $\theta$ on this quantity defines the transition point.
\end{proof}

Below saturation, $\widehat{\text{PoA}}$ reflects a structural constant ($\approx 7$ on 2 workers for the 70B, $\approx 19$ for the 340B; these are index values under our uncalibrated cost model, not absolute efficiency ratios).
Above saturation, it grows rapidly as (a)~superlinear latency amplifies routing suboptimalities and (b)~P/D resource allocation becomes the dominant inefficiency source, as the single prefill worker queues requests regardless of decode balance.

For \emph{polynomial} latency functions, \citet{colinibaldeschi2020} proved that PoA converges to 1 in both light and heavy traffic, with worst-case PoA at intermediate demand.
This does not contradict our analysis: the singularity at $n_j^{\text{sat}}$ in Equation~\ref{eq:latency-superlinear} is superpolynomial, and our regime transition occurs where this singular term dominates.

\subsection{Asymmetric Prefill-Decode Saturation}
\label{sec:sat-coupled}

Prefill and decode saturate on \emph{different resources at different rates}.
Prefill saturates on compute (FLOPS ceiling on the GPU's SM cores); decode saturates on memory bandwidth (HBM read throughput for KV cache attention).
On NVIDIA B200 GPUs, these ceilings are:
\begin{itemize}[nosep]
  \item Prefill: $\sim$20 PFLOPS FP4 (SM100 architecture)
  \item Decode: $\sim$8 TB/s HBM3e bandwidth
\end{itemize}

This asymmetry means the pools saturate at different request rates and with different symptoms.
More critically, saturation of one pool cascades to the other:

\begin{enumerate}[nosep]
  \item A saturated prefill pool queues incoming requests, reducing the rate at which new sequences are handed to the decode pool.
  \item Decode workers become idle, waiting for KV cache transfers that never arrive.
  \item The Planner detects underutilized decode workers and shifts GPUs to prefill.
  \item This temporarily relieves prefill saturation but reduces decode capacity.
  \item When the prefill backlog clears, the burst of new sequences overwhelms the now-smaller decode pool.
\end{enumerate}

This oscillation pattern is characteristic of coupled dynamical systems with delayed feedback.
In game-theoretic terms, the variational equilibrium of the P/D game (Proposition~\ref{prop:pd-equilibrium}) becomes \emph{unstable at saturation}: the Planner's $\pm 1$ constraint and 30-second adjustment interval create a lag that prevents tracking rapid load changes.

\subsection{KV Cache Phase Transitions}
\label{sec:sat-kv}

The KV cache placement game (Section~\ref{sec:game2}) undergoes a qualitative shift at saturation.

\paragraph{Below saturation.}
HBM has sufficient capacity for all active KV blocks.
The caching game is trivial: every block is stored in G1 (HBM), access costs are minimal, and the Price of Anarchy is 1 regardless of placement strategy.

\paragraph{At saturation.}
HBM fills, and the KVBM begins evicting blocks to G2 (DRAM) or G3 (SSD).
Each eviction decision creates a negative externality: evicting one request's KV blocks forces recomputation when that request's decode phase next needs those blocks, consuming prefill compute that could have served new requests.
The caching game transitions from a benign regime where all strategies are near-optimal to a contested regime where eviction decisions have significant social cost.

The game becomes a \textbf{congestion game with coupled resources}: memory eviction on one resource type (HBM) creates load on another (prefill compute).
This coupling is not captured by standard selfish caching models, which assume that each resource type operates independently.

\begin{proposition}[KV cache PoA regime transition]
\label{prop:kv-poa-transition}
Let $\rho = \sum_j |B_j| / \sum_j K_j^{G1}$ be the ratio of total active KV blocks to total HBM capacity across all workers.
\begin{enumerate}[nosep]
  \item When $\rho < 1$ (below capacity): $\text{PoA}_{\text{KV}} = 1$.
  \item When $\rho \geq 1$ (at capacity): $\text{PoA}_{\text{KV}}$ scales sublinearly with worker count---classically $O(\sqrt{n})$ on line topologies under uniform local cost~\citep{chun2004}, though our tier-heterogeneous model (Equation~\ref{eq:kv-cost}) relaxes that assumption---while approaching 1 on well-connected topologies (NVLink within a node).
  \item The transition at the first eviction ($\rho$ crossing 1) is sharp: the PoA jumps from 1 to a value determined by the eviction policy and network topology.
\end{enumerate}
\end{proposition}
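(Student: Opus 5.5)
The plan is to treat the three items separately, each time reducing to facts already recorded for $\Gamma_{\text{KV}}$ in Definition~\ref{def:kv-game} and Proposition~\ref{prop:kv-equilibrium}.

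\textbf{Item (i).} If $\rho<1$, there is an allocation in which every active block $b\in B_j$ sits in worker $j$'s own HBM and all capacities $K_j^{G1}$ are respected. Under Equation~\ref{eq:kv-cost}, each such access costs $\alpha_{G1}$, and this is the smallest entry of the cost vector. So the social optimum equals $\alpha_{G1}\sum_j|B_j|$, the pointwise lower bound.

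For the PoA to equal $1$, every Nash equilibrium must attain this bound as well. The approach is a deviation argument. Suppose some equilibrium leaves a block $b$ of worker $i$ outside $i$'s HBM. Then worker $i$ pays strictly more than $\alpha_{G1}$ for $b$, since $\alpha_{G1}<\alpha_{G2}<\alpha_{G3}<\gamma$ and $d(i,j)\ge 0$. If $i$'s HBM has free room, moving $b$ into it is a profitable unilateral deviation.

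Here the condition $\rho<1$ has to be strengthened. Aggregate slack does not guarantee that worker $i$ itself has free capacity, so I will make explicit the per-worker reading $|B_j|\le K_j^{G1}$ for all $j$. This is implicit in the paper's ``HBM has sufficient capacity for all active KV blocks''. Under that reading, HBM placement is a dominant strategy, the equilibrium is unique, and the PoA is $1$. A remark would note that under only the aggregate condition, cross-worker spillover can create $d(i,j)$ costs.

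\textbf{Item (ii).} Once $\rho\ge 1$, some blocks must live outside G1 or be remote, and the game becomes a capacitated selfish replication instance, covered by item~3 of Proposition~\ref{prop:kv-equilibrium}.

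For well-connected topologies, I reuse item~2 of that proposition. When $d(i,j)\equiv d$ is constant, the placement problem is symmetric across workers. Any equilibrium then fills each HBM with the blocks of highest marginal saving. The optimum does the same, so the two costs coincide, or nearly so when there are tier-cost ties.

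For general topologies, I will only import the $O(\sqrt n)$ line-topology bound of \citet{chun2004} in the uniform-local-cost specialization. I will state explicitly that the tier-heterogeneous extension remains conjectural, consistent with Remark~\ref{rem:kv-hierarchical-conjecture}. That is, this item is proved only under the classical specialization.

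\textbf{Item (iii).} The claim is a jump at $\rho=1$. I would exhibit a two-worker instance in which $\rho$ just exceeds $1$. In it, selfish eviction leaves a block recomputed at cost $\gamma$, while the optimum places it remotely at cost $d+\alpha_{G2}$. The resulting PoA has the form $1+\Theta\big((\gamma-d-\alpha_{G2})/\text{total cost}\big)$, which is bounded away from $1$ and does not vanish as $\rho\downarrow 1$. This shows the map from $\rho$ to the PoA is discontinuous at $1$.

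\textbf{Main obstacle.} The hardest part is making (iii) rigorous. ``Sharp'' needs a definition, and the jump size depends on the instance: with homogeneous prefixes it can be zero. So I would prove existence of instances with a discontinuous jump rather than a universal jump. I would also flag, as in item (i), that the per-worker capacity condition, not the aggregate $\rho$, is what actually governs the transition.
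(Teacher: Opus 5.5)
The paper gives no proof of this proposition beyond the informal remark before it: below capacity every block fits in HBM, so placement is trivially optimal. Your item (i) is sharper than that. You are right that aggregate $\rho<1$ does not stop a single worker from overflowing. Under $|B_j|\le K_j^{G1}$ for all $j$, your deviation argument does give $\text{PoA}_{\text{KV}}=1$. Drop ``the equilibrium is unique'', though: a worker with slack can host other workers' blocks at no cost to itself, so equilibria are only cost-equivalent.

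\textbf{Gap in (ii).} The step ``any equilibrium fills each HBM with the blocks of highest marginal saving, and the optimum does the same'' ignores the positive externality of a shared replica, which no single worker internalizes. Take $n$ workers on a complete graph with uniform $d$ and $K_j^{G1}=1$. Let worker $j$ need a private block $a_j$ and a common prefix block $s$.
\begin{itemize}
\item Each worker keeping $a_j$ in HBM and $s$ in its own DRAM is a Nash equilibrium: swapping is cost-neutral, and reading another worker's DRAM costs $d+\alpha_{G2}\ge\alpha_{G2}$. Its social cost is $n(\alpha_{G1}+\alpha_{G2})$.
\item Letting worker~1 hold $s$ in HBM while the others read it remotely costs $\alpha_{G1}+\alpha_{G2}+(n-1)(2\alpha_{G1}+d)$.
\end{itemize}
The ratio tends to $n$ as $\alpha_{G2}/(\alpha_{G1}+d)\to\infty$; a slightly higher access frequency on private blocks makes the bad equilibrium strict. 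So the capacitated game is not near-efficient on well-connected topologies, and no symmetry argument can show it is. Chun et al.'s complete-graph result does not transfer: there replication has a fixed price and no capacity limit. Here, hosting $s$ in HBM costs the holder the slot of its own block, while the DRAM fallback is equally good for it, so nobody volunteers. For the same reason, importing the $O(\sqrt n)$ line bound proves nothing about $\rho\ge 1$. That bound is for the uncapacitated placement-cost model, not a special case of a regime defined by binding capacity. Item~3 of Proposition~\ref{prop:kv-equilibrium} supplies existence and computability, not a PoA bound.

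\textbf{Gap in (iii).} Your instance needs an unstated condition. Recomputation at $\gamma$ is never a best response while the overflowing worker's own DRAM has room, since $\alpha_{G2}<\gamma$. The optimum likewise prefers local DRAM to remote DRAM at $d+\alpha_{G2}$. So the instance requires that worker's lower tiers to be full, which is a condition on $K^{G2},K^{G3}$ rather than on $\rho$. More seriously, for fixed costs the excess does vanish as $\rho\downarrow 1$. Each worker can unilaterally secure cost at most $\alpha_{G1}\min(|B_j|,K_j^{G1})+\gamma\,(|B_j|-K_j^{G1})^+$, while $\text{OPT}\ge\alpha_{G1}\sum_j|B_j|$. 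Under balanced load $|B_j|=\rho K_j^{G1}$ this gives $\text{PoA}_{\text{KV}}\le\rho^{-1}\bigl(1+(\gamma/\alpha_{G1})(\rho-1)\bigr)\to 1$. Your own estimate $1+\Theta\bigl((\gamma-d-\alpha_{G2})/\text{total cost}\bigr)$ shows the same thing, since $\rho=1+1/\sum_jK_j^{G1}$ approaches $1$ only as total cost grows. A jump survives only in a trivial discrete sense (one fixed instance going from exactly $1$ to above $1$ at its first per-worker overflow), or if $\gamma$ scales with instance size.

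So (iii) cannot be proved as a discontinuity in $\rho$. Together with the counterexample above, items (ii) and (iii) need to be restated rather than proved.
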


\subsection{The Pareto Frontier Shifts at Saturation}
\label{sec:sat-pareto}

The Pareto frontier relating throughput, latency, and parameter configuration (Section~\ref{sec:introduction}) is not static; it deforms under saturation.

Below saturation, the frontier is relatively flat: small throughput gains cost small latency increases, and the system operates in a region where most parameter configurations achieve similar performance.
The AIConfigurator's enumeration approach works well here because the frontier is smooth and configurations are easy to distinguish.

At saturation the frontier steepens (small throughput gains cost large latency increases) and becomes \emph{parameter-sensitive}: configurations near-optimal below saturation may be far from the frontier above it.
A single Pareto-optimal configuration computed at deployment time (as AIConfigurator does) can therefore become suboptimal as load rises.
The empirical heatmaps in Section~\ref{sec:res-pareto-saturated} (Figure~\ref{fig:pareto-heatmap}) show this directly: a uniformly flat landscape at $C=64$ becomes rugged at $C=128$.

\begin{remark}
The ideal system would dynamically select among regime-specific Pareto frontiers, precisely the goal of the adaptive controller in Section~\ref{sec:controller}.
\end{remark}

\section{Adaptive Controller Design}
\label{sec:controller}

Since the Price of Anarchy is bounded below saturation but degrades rapidly above it (Sections~\ref{sec:formalization}--\ref{sec:saturation}), a static deployment-time configuration is insufficient: the system needs a controller that detects regime changes and adapts routing parameters.

The controller operates entirely in Python, requires no modifications to \dynamo's Rust core, and leverages \texttt{router\_config\_override} to adjust parameters per-request.

\subsection{Design Principles}
\label{sec:ctrl-principles}

The controller is guided by three principles derived from the game-theoretic analysis:

\begin{enumerate}[leftmargin=*]
  \item \textbf{Regime detection, not equilibrium computation.}
  Computing Nash equilibria at runtime is intractable (Section~\ref{sec:introduction}).
  Instead, the controller detects which \emph{regime} the system is operating in (below saturation, at saturation, or above saturation) and applies parameter settings known to be near-optimal for that regime.
  This reduces the problem from PPAD-hard equilibrium computation to a classification task over aggregate metrics.

  \item \textbf{Track a robust saturation signal.}
  The theoretically motivated signal is the second derivative of latency with respect to load (Proposition~\ref{prop:poa-divergence}), which is hardware-agnostic.
  We attempted to track $d^2 \bar{L}/d\lambda^2$ directly via finite differences but found that 5\,s polling produces estimates dominated by sampling noise.
  The deployed signal is an EWMA of TTFT P99 (Section~\ref{sec:ctrl-detection}), which smooths transient spikes while tracking sustained increases.
  This trades the hardware-agnostic theoretical invariant for a model-specific threshold ($\theta_1$ scaled to the model's baseline TTFT) but fires reliably across both models in our experiments.

  \item \textbf{Exploit regime-specific PoA bounds.}
  Below saturation, the PoA is bounded (Remark~\ref{rem:poa-omega}), meaning the efficiency loss from selfish routing is tolerable, and more than offset by the TTFT reduction from KV cache hits.
  At saturation, rapidly growing PoA means cache-affinity routing can create severe hotspots.
  The controller shifts from exploitation (cache affinity) to coordination (load balancing) as the PoA bound degrades.
\end{enumerate}

\subsection{Saturation Detection}
\label{sec:ctrl-detection}

The controller monitors a sliding window of TTFT P99 and ITL P99 values, sampled from \dynamo's Prometheus metrics every 5 seconds.
Following principle~2 above, the deployed signal is an EWMA of TTFT P99:
\begin{equation}
\label{eq:saturation-signal}
  \bar{L}(t) = \alpha \cdot L(t) + (1 - \alpha) \cdot \bar{L}(t - \Delta)
\end{equation}
where $L(t)$ is the TTFT P99 at time $t$, $\Delta$ is the 5-second polling interval, and $\alpha = 0.3$ controls responsiveness.
A sustained rise in $\bar{L}(t)$ above $\theta_1$ indicates the system has crossed the saturation knee where Proposition~\ref{prop:poa-divergence} predicts PoA degradation.

The saturation state is classified into three regimes:
\begin{equation}
\label{eq:saturation-state}
  \text{regime}(t) = \begin{cases}
    \text{\textsc{Below}} & \text{if } \bar{L}(t) < \theta_1 \\
    \text{\textsc{Transition}} & \text{if } \theta_1 \leq \bar{L}(t) < \theta_2 \\
    \text{\textsc{Saturated}} & \text{if } \bar{L}(t) \geq \theta_2
  \end{cases}
\end{equation}

where $\theta_1$ and $\theta_2$ are model-dependent thresholds determined empirically (Section~\ref{sec:experiments}).
For the 70B model, $\theta_1 = 300$\,ms and $\theta_2 = 2$\,s; for the 340B, whose baseline TTFT is $\sim$150--200\,ms (vs.\ $\sim$55\,ms for the 70B), we use $\theta_1 = 1.0$\,s and $\theta_2 = 10.0$\,s.
In practice, $\theta_1$ should be set as a multiple ($\sim$3--5$\times$) of the model's baseline TTFT P99 rather than as an absolute value.
A hysteresis margin prevents oscillation at regime boundaries: the transition from \textsc{Below} to \textsc{Transition} requires $\bar{L}(t) > \theta_1$ for $k$ consecutive samples, while the reverse requires $\bar{L}(t) < \theta_1 - \epsilon$ for $k$ samples.

\subsection{Parameter Adaptation Strategy}
\label{sec:ctrl-adaptation}

Based on the saturation regime, the controller adjusts two parameters via \texttt{router\_config\_override} (Table~\ref{tab:regime-params}):

\begin{table}[h]
\centering
\caption{Parameter settings per saturation regime. The \textsc{Below} and \textsc{Transition} rows were exercised in Experiment~3; the \textsc{Saturated} row (\S) is \emph{conjectural}---its parameters were interpolated from the $\tau \in \{0.0, 0.3, 0.7, 1.0\}$ Pareto sweep and never fired in any reported experiment, because the $C{=}32 \to 128 \to 32$ load spike did not persist long enough to cross $\theta_2$. See Section~\ref{sec:ctrl-adaptation}, ``Saturated.''}
\label{tab:regime-params}
\small\begin{tabular}{@{}lccp{6cm}@{}}
\toprule
\headrow \textbf{Regime} & \textbf{Temperature $\tau$} & \textbf{Overlap weight $\omega$} & \textbf{Rationale} \\
\midrule
\textsc{Below}      & $0.0$ & $1.0$ & Exploit cache locality ($\widehat{\text{PoA}}$ bounded) \\
\textsc{Transition} & $0.7$ & $1.0$ & Calibrated optimum from 70B 1P/5D sweep (Exp.~4b) \\
\textsc{Saturated}\textsuperscript{\S} & $0.8$ & $0.1$ & \emph{Conjectural} (not fired in any reported experiment)---prioritize load balancing ($\widehat{\text{PoA}}$ grows rapidly) \\
\bottomrule
\end{tabular}
\end{table}

The rationale for each regime maps directly to the game-theoretic analysis:

\paragraph{Below saturation ($\tau=0, \omega=1$).}
Latency functions are approximately linear, so the routing game's PoA is bounded (Remark~\ref{rem:poa-omega}).
The bounded worst-case inefficiency from selfish routing is more than offset by the TTFT reduction from KV cache hits.
Deterministic, cache-affinity routing ($\tau = 0$, $\omega = 1$) maximizes prefix reuse.

\paragraph{Transition ($\tau=0.7, \omega=1.0$).}
The system is approaching saturation.
We use the 70B 1P/5D sweep as the primary driver for parameter selection because the larger routing game ($m = 5$) offers greater parameter differentiation ($2.0\times$ spread); on 1P/5D, $(\tau=0.7, \omega=1.0)$ was among the lower-$\widehat{\text{PoA}}$ cells.
We then apply the same $(\tau, \omega)$ to the 70B 1P/2D and 340B 1P/2D deployments.
This cross-topology transfer is a deliberate simplification: the 70B 1P/2D sweep (Table~\ref{tab:pareto-c128-70b}) has its own minimum at $(\tau=0.3, \omega=0.7) = 14.6$ (chosen-cell value 20.5), and the 340B 1P/2D sweep (Table~\ref{tab:pareto-c128-340b}) has its own minimum at $(\tau=0.3, \omega=1.0) = 26.6$ (chosen-cell value 34.4).
We adopt a single $(\tau, \omega)$ setting across topologies to test whether a \emph{single} regime-gated configuration, informed by the largest routing game, generalizes; the measured improvements on 1P/2D should therefore be read as \emph{lower bounds} on what a per-topology native sweep would achieve.
On the 340B with $m = 2$ workers, all configurations perform similarly ($1.6\times$ spread), so the penalty for cross-topology transfer is small in that regime.
The high temperature introduces stochastic load balancing while retaining full cache affinity.

\paragraph{Saturated ($\tau=0.8, \omega=0.1$).}
Latency functions are superlinear and $\widehat{\text{PoA}}$ grows rapidly (Proposition~\ref{prop:poa-divergence}), so we interpolate $\tau = 0.8$ with low overlap weight to approximate power-of-two-choices load balancing and suppress cache-warm hotspots.
This row never fired in Experiment~3 (the load spike did not persist long enough to cross $\theta_2$ under EWMA hysteresis) and should be read alongside any experiment that holds $C \geq 256$ for a longer duration.

\subsection{Implementation}
\label{sec:ctrl-implementation}

The controller wraps \dynamo's Python router handler (\texttt{KvPushRouter}) with a game-theoretic adapter (Algorithm~\ref{alg:adaptive-router}):

\begin{algorithm}[t]
\caption{Game-Theoretic Adaptive Router}
\label{alg:adaptive-router}
\SetAlgoLined
\KwIn{Incoming request with token IDs $t$}
\KwOut{Worker assignment $(w_j, \text{dp\_rank}, \text{overlap})$}
\BlankLine
$\text{sat\_state} \gets \text{saturation\_detector.current\_state}()$\;
$(\tau, \omega) \gets \text{regime\_params}[\text{sat\_state}]$\;
$\text{config} \gets \text{KvRouterConfig}(\text{temperature}=\tau, \text{overlap\_weight}=\omega)$\;
$t_0 \gets \text{now}()$\;
$(w_j, \text{dp\_rank}, \text{overlap}) \gets \text{kv\_router.best\_worker}(t, \text{router\_config\_override}=\text{config})$\;
\BlankLine
\tcp{Export game-theoretic metrics}
$\text{poa\_gauge.set}(\text{poa\_tracker.current\_poa}())$\;
$\text{saturation\_gauge.set}(\text{sat\_state.value})$\;
$\text{temperature\_gauge.set}(\tau)$\;
$\text{routing\_cost\_hist.observe}(\text{now}() - t_0)$\;
\BlankLine
\Return{$(w_j, \text{dp\_rank}, \text{overlap})$}\;
\end{algorithm}

The key implementation detail is that \dynamo's Python router handler accepts a \texttt{router\_\allowbreak{}config\_\allowbreak{}override} argument in its \texttt{best\_\allowbreak{}worker()} method.
This allows the controller to adjust temperature and overlap weight \emph{on every individual request} without restarting any component.
In our experiments, we implement a zero-downtime variant: two frontends run simultaneously with different parameter configurations (default on port~8000, optimal on port~8001), and the workload generator switches target port upon detecting the regime transition.
This eliminates the throughput penalty of a frontend restart while achieving the same parameter switch.
The saturation detector polls Prometheus every 5\,s and classifies the regime via the EWMA signal, adding negligible overhead.

\paragraph{Prometheus metrics.}
The controller exports four custom metrics via \dynamo's metrics API:
\begin{itemize}[nosep]
  \item \texttt{game\_poa}: Current estimated Price of Anarchy (gauge).
  \item \texttt{game\_saturation\_state}: Current regime (0/1/2) (gauge).
  \item \texttt{game\_router\_temperature}: Active temperature value (gauge).
  \item \texttt{game\_routing\_cost}: Per-worker routing cost distribution (histogram).
\end{itemize}

These metrics enable real-time monitoring of game-theoretic properties via Grafana.

\paragraph{PoA estimation.}
The controller estimates PoA in a sliding window by comparing the observed social cost (sum of actual per-request latencies) against a hindsight-optimal assignment computed offline:
\begin{equation}
\label{eq:poa-estimate}
  \widehat{\text{PoA}}(t) = \frac{\sum_{q \in W(t)} L_q^{\text{actual}}}{\text{OPT}(W(t))}
\end{equation}
where $W(t)$ is the set of requests completed in the window ending at time $t$, $L_q^{\text{actual}}$ is request $q$'s observed latency, and $\text{OPT}(W(t))$ is the minimum total latency achievable by reassigning those requests to workers, computed via the Hungarian algorithm on a cost matrix of estimated per-worker latencies.
Structurally, this is a competitive ratio against an offline hindsight-optimal assignment; we adopt PoA vocabulary because the paper's load-bearing claims concern equilibrium-level regime structure rather than online worst-case bounds.
Since routing is many-to-one (multiple requests per worker), the cost matrix replicates each worker column up to its capacity, yielding a square matrix for one-to-one optimal assignment; the resulting $\text{OPT}$ is a lower bound on the true many-to-one optimal.
The cost matrix uses an uncalibrated parametric model: $c_{ij} = a \cdot n_j + b + d/(C_j - n_j)^\beta - w_c \cdot o_{ij}$, with $a = 0.005$, $b = 0.020$, $d = 0.010$, $\beta = 2$, $C_j = 64$ (capacity per worker), and cache weight $w_c = 0.015$.
These parameters are not fitted to observed latencies; they define a relative efficiency index whose absolute magnitude depends on the parameter choice.
Consequently, the PoA values should be interpreted as \emph{regime indicators} (rising PoA signals saturation) and \emph{comparative metrics} (adaptive vs.\ static PoA), not as absolute efficiency ratios (a PoA of 19 does not mean ``$19\times$ worse than optimal'' in system terms).
The cost matrix uses frozen latencies from the observed allocation, ignoring how redistribution would change loads.
$\text{OPT}(W(t))$ is therefore an underestimate of the true optimal social cost, making $\widehat{\text{PoA}}$ an \emph{upper bound} on the true PoA.
This conservatism is acceptable for the controller's purpose: regime detection only needs the rate of change, and a rising $\widehat{\text{PoA}}$ corroborates the saturation signal from Equation~\ref{eq:saturation-signal} regardless of absolute calibration.

\section{Experimental Setup}
\label{sec:experiments}

We validate our framework on a 3-node GPU cluster running \dynamo with disaggregated prefill/decode serving (Table~\ref{tab:hardware}).

\subsection{Hardware}
\label{sec:exp-hardware}

\begin{table}[h]
\centering
\caption{Cluster configuration.}
\label{tab:hardware}
\begin{tabular}{@{}lp{10cm}@{}}
\toprule
\headrow \textbf{Component} & \textbf{Specification} \\
\midrule
Nodes & 3$\times$ HGX B200 \\
GPUs per node & 8$\times$ B200 SXM (192\,GB HBM3e each) \\
GPUs used per node & 4 (TP=4, 70B) or 8 (TP=8, 340B) \\
Total GPUs active & 12 (70B 1P/2D), 24 (340B 1P/2D), 24 (70B 1P/5D) \\
Intra-node interconnect & NVLink5 via NVSwitch (900\,GB/s per-GPU bidirectional) \\
Inter-node interconnect & 2$\times$ ConnectX-7 InfiniBand per node (400\,Gb/s each; native RDMA via UCX/verbs, $\sim$390\,Gb/s measured between node pairs at 97.5\% line rate) \\
CPU & AMD EPYC, 248 cores per node \\
System memory & 2826\,GB DDR5 per node \\
\bottomrule
\end{tabular}
\end{table}

Node~3 carries pkey \texttt{0x8060} at index~1 (RDMA tools require \texttt{--pkey\_index 1}, child interface \texttt{ibp14s0.8060}); nodes run Ubuntu 24.04 LTS.\ifanon\else{} The cluster was provisioned as Vultr SKU \texttt{vcg-b200-248c-2826g-1536vram}.\fi

\subsection{Software Stack}
\label{sec:exp-software}

\begin{itemize}[nosep]
  \item \textbf{Inference framework:} \dynamo v0.9.0 (\url{nvcr.io/nvidia/ai-dynamo/vllm-runtime:0.9.0-cuda13})
  \item \textbf{Runtime:} vLLM backend with PagedAttention~\citep{kwon2023vllm}
  \item \textbf{KV transfer:} NIXL over UCX/verbs on InfiniBand for the data path; IPoIB (10.0.0.\{1,2,3\}/24 on \texttt{ibp14s0}) used only for the NIXL metadata side channel (\texttt{VLLM\_NIXL\_SIDE\_CHANNEL\_HOST}), not for KV bulk transfer
  \item \textbf{Coordination:} etcd (service discovery), NATS JetStream (event plane)
  \item \textbf{Monitoring:} Prometheus + NATS event correlation
  \item \textbf{Controller:} Python 3.12, \texttt{nats-py}, \texttt{scipy}
\end{itemize}

FP8 is applied at load time via vLLM's \texttt{--quantization fp8} flag, which resolves to FP8-E4M3 weights and activations with dynamic per-tensor activation scaling (no calibration set used).

\subsection{Models and Topologies}
\label{sec:exp-model}

Our primary model is \textbf{Nemotron-4-340B-Instruct} at FP8 runtime quantization ($\sim$40.9\,GiB per TP rank).
We load the community BF16 repack \texttt{mgoin/Nemotron-4-340B-Instruct-vllm} (derived from NVIDIA's Nemotron-4-340B-Instruct for vLLM compatibility) and apply FP8 at load time via vLLM's \texttt{--quantization fp8} flag; NVIDIA has not published an official FP8 Nemotron-4-340B checkpoint.
With TP=8, each worker spans all 8 GPUs on a full node, forcing every KV transfer to cross InfiniBand, representative of production-scale disaggregated deployments where workers occupy entire nodes.
We use a \textbf{1P/2D} topology: one prefill worker on node~1, two decode workers on nodes~2 and~3 ($m = 2$ resources).

To validate that the game-theoretic properties are not model-specific, we also run all experiments on \textbf{Llama-3.1-70B-Instruct-FP8}~\citep{grattafiori2024llama3} (NVIDIA's official FP8 checkpoint \texttt{nvidia/Llama-3.1-70B-Instruct-FP8}), a 70-billion parameter dense model ($\sim$70\,GB at FP8).
With TP=4, each worker spans 4 GPUs on a single node, enabling two additional topologies:

\begin{itemize}[nosep]
  \item \textbf{1P/2D} (4 GPUs per node): same minimal routing game as the 340B ($m=2$).
  \item \textbf{1P/5D} (all 8 GPUs per node): one prefill worker on node~1 (GPUs~0--3), five decode workers across all three nodes (node~1 GPUs~4--7, nodes~2--3 with two workers each, $m=5$).
  This tests whether the routing game's properties change with a larger action space.
\end{itemize}

The frontend (router) runs on node~1 alongside prefill for both models.

The P/D split is fixed within each topology; varying the split (Game~1) would require restarting backend workers across nodes.

\subsection{Workload Design}
\label{sec:exp-workload}

All experiments use a \textbf{short chat} workload: 5 prompt templates with 128 input tokens, 256 max output tokens, temperature~0.0 (deterministic generation).
Requests are generated at controlled concurrency levels using an async Python client that maintains a fixed number of in-flight requests via a semaphore.
Each concurrency level includes a 30-second ramp phase (linear increase) followed by a hold phase at the target level.
Experiment~1 includes ramp data in aggregate metrics; Experiment~2 excludes ramp data to isolate steady-state behavior.
The workload client does not set an explicit random seed; generation is deterministic (temperature~0), so run-to-run variation arises from async scheduling and request timing rather than from sampling noise.
Experiment~3's below-saturation phases characterize this variation empirically ($\sigma < 0.5\%$ on $\widehat{\text{PoA}}$, $< 5$\,ms on TTFT~P99).

\paragraph{Cool-down and outlier handling (registered \emph{a priori}).}
Multi-phase load-spike experiments (Experiment~3) include a 60\,s cool-down between iterations to drain the queue established during the saturated phase.
Per-iteration TTFT P99 values that exceed $5\times$ the within-strategy median for the same phase are flagged as cool-down artefacts, reported separately in Table~\ref{tab:adaptive-70b}'s footnote, and excluded from the strategy's reported mean only when the operational cause (incomplete drain from the prior saturated phase) is identifiable from the request timestamps.
Both with-flag and without-flag aggregates are reported in any affected case so the reader can verify the rule's effect.
This rule was registered before re-running the post-bug-fix experiments; it triggered on one iteration of the 70B 1P/5D static below-saturation phase (Section~\ref{sec:res-adaptive}, Table~\ref{tab:adaptive-70b} footnote).

\subsection{Artifact Availability}
\label{sec:exp-artifact}

\ifanon
The controller wrapper ($\sim$270 lines of Python around \dynamo's \texttt{KvPushRouter}), the measurement harness, the NATS event-correlation scripts, and the raw per-request JSON logs underlying all reported tables and figures will be released under a permissive open-source licence upon acceptance; the repository URL is withheld for double-blind review.
\else
The controller wrapper ($\sim$270 lines of Python around \dynamo's \texttt{KvPushRouter}), the measurement harness, the NATS event-correlation scripts, and the raw per-request JSON logs underlying all reported tables and figures are intended for open-source release alongside publication; this statement will be updated with the repository URL in a revised version.
\fi

\subsection{PoA Measurement}
\label{sec:exp-poa}

Measuring the Price of Anarchy requires per-request worker attribution, which \dynamo's HTTP API does not expose.
We solve this via \textbf{NATS event correlation}: \dynamo's event plane publishes \texttt{active\_sequences\_events} on NATS, containing per-request \texttt{worker\_id} assignments.
Each HTTP SSE response includes an \texttt{id: cmpl-<uuid>} field that matches the NATS event's \texttt{request\_id}.
By correlating these, we achieve 100\% per-request decode worker attribution across all experiments.

The PoA estimate uses the Hungarian algorithm on a frozen-latency cost matrix (Equation~\ref{eq:poa-estimate}); see Section~\ref{sec:ctrl-implementation} for the upper-bound interpretation and the measurement note in Section~\ref{sec:results}.

\subsection{Experiments}
\label{sec:exp-list}

We conduct four experiments:

\paragraph{Experiment 1: Equilibrium characterization.}
Sweep 14 concurrency levels (1, 2, 4, 8, 16, 32, 48, 64, 96, 128, 192, 256, 384, 512) with 120-second holds per level.
At each level we measure: TTFT and ITL distributions, PoA via NATS correlation, Nash equilibrium estimate for P/D allocation, and saturation regime classification.

\paragraph{Experiment 2: Saturation regime detection.}
Sweep 9 concurrency levels (1, 4, 8, 16, 32, 64, 128, 256, 512) with the calibrated SaturationDetector active.
Measures PoA, the detector's regime classification, the finite derivative $d(\text{TTFT P99})/d(\text{concurrency})$ at each level, and KV cache tier distribution.
Tests whether the detector correctly identifies the saturation knee observed in Experiment~1.

\paragraph{Experiment 3: Adaptive versus static routing.}
Three-phase load spike ($C = 32 \to 128 \to 32$, durations 120/180/120\,s) comparing static routing (default parameters throughout) against an adaptive controller that detects the \textsc{transition} regime and switches to the Experiment~4b optimal configuration.
We run 3 iterations of each strategy to obtain confidence intervals.
The adaptive strategy uses a \emph{dual-frontend} design: two frontends run simultaneously on ports~8000 (default: $\tau=0$, $\omega=1$) and~8001 (optimal: $\tau=0.7$, $\omega=1.0$), with the workload generator switching target port upon detection, eliminating the throughput penalty of a frontend restart.
The SaturationDetector polls Prometheus at 5\,s intervals.
Run on 1P/5D for 70B and 1P/2D for 340B.

\paragraph{Experiment 4: Pareto sweep.}
Parameter sweep over a $4 \times 4$ grid: $\tau \in \{0.0, 0.3, 0.7, 1.0\}$ and $\omega \in \{0.0, 0.3, 0.7, 1.0\}$.
For each configuration, the frontend is restarted with new CLI flags, health-checked, ramped, and held for 180~seconds.
We run the sweep at two concurrency levels:
\begin{itemize}[nosep]
  \item \textbf{Experiment 4a:} $C = 64$ (below saturation), testing parameter sensitivity under normal load.
  \item \textbf{Experiment 4b:} $C = 128$ (at saturation), testing whether parameter sensitivity emerges at saturation.
\end{itemize}

\section{Results}
\label{sec:results}

We present experiments in logical rather than numerical order: Experiments~1 and~2 establish the regime structure, Experiments~4a and~4b characterize parameter sensitivity across regimes, and Experiment~3 validates the adaptive controller whose parameter choices are informed by 4a/4b.

\paragraph{Measurement note.}
\textbf{Experiments~1, 2, 4a, and 4b are single-run measurements}; the reported values reflect a single trial at each configuration and carry unknown measurement uncertainty.
The cross-configuration spread (e.g., $\pm 0.10$ across 16 Pareto configurations in Experiment~4a) measures \emph{parameter sensitivity}, not measurement uncertainty; individual PoA values (e.g., the 18.7 plateau) could shift on rerun.
Only Experiment~3 includes repeated trials ($n=3$) with means and sample standard deviations.
With $n=3$ and $\text{df}=2$, the 95\% confidence interval multiplier is $t_{0.025,2} = 4.30$, yielding wide intervals for high-variance baselines (e.g., the 70B 1P/5D static TTFT 95\% CI spans $-5.7$ to $22.1$\,s).
The Experiment~3 below-saturation phases provide a natural estimate of run-to-run variation: $\widehat{\text{PoA}}$ standard deviations of 0.02--0.11 ($\leq$0.75\%) and TTFT P99 variation under 20\,ms (one documented 2.84\,s outlier aside), suggesting stable measurement under non-saturated conditions.
\textbf{Cross-experiment drift at saturation is non-trivial.} The same nominal 340B 1P/2D configuration at $C=128$, $\tau=0$, $\omega=1$ reports three different values across experiments: $\widehat{\text{PoA}} = 27.9$ in Experiment~1 (Table~\ref{tab:equilibrium}), $36.0$ in Experiment~4b (Table~\ref{tab:pareto-c128-340b}, top-right cell), and $33.15 \pm 1.14$ in Experiment~3's static-saturated phase (Table~\ref{tab:adaptive-340b}). The spread is $\sim$29\% across three single-run or $n{=}3$ measurements at nominally identical settings.
This spread is comparable to the $1.6\times$ cross-configuration spread used to argue 340B parameter sensitivity at saturation (Section~\ref{sec:res-pareto-saturated}), so saturation-regime parameter-sensitivity claims on the 340B should be read with this drift in mind.
All reported means are averages of per-iteration P99 values, not the P99 of pooled data.
$\widehat{\text{PoA}}$ is a regime-indicator relative efficiency index (see Section~\ref{sec:ctrl-implementation} for its definition and scope); it can fall below 1 when the Hungarian denominator imperfectly approximates the attainable optimum under current KV-cache state (documented for $m{=}5$ at $C{=}64$; see Section~\ref{sec:disc-limitations}).

\subsection{Experiment 1: Equilibrium Characterization}
\label{sec:res-equilibrium}

Table~\ref{tab:equilibrium} presents the Nemotron-4-340B results across 14 concurrency levels, with NATS-based decode worker correlation achieving 100\% match rate at all levels.

\begin{table}[t]
\centering
\caption{Equilibrium characterization: Nemotron-4-340B (FP8, TP=8, 1P/2D). TTFT and ITL are P99 values. \textsuperscript{\dag}Low-load $\widehat{\text{PoA}}$ rows ($C \leq 4$) are estimator artifacts (few in-flight requests inflate the index); they are excluded from plateau claims. \textsuperscript{\ddag}Regime labels use the 70B-calibrated threshold $\theta_1{=}300$\,ms. Under the 340B-corrected threshold $\theta_1{=}1.0$\,s (Section~\ref{sec:res-saturation}), rows $C{=}32$--$96$ would reclassify as \textsc{Below}.}
\label{tab:equilibrium}
\small
\begin{tabular}{@{}rrrrrr@{}}
\toprule
\headrow $C$ & TTFT P99 & ITL P99 & $\widehat{\text{PoA}}$ & rps & Regime \\
\midrule
1   & 74\,ms    & 19.8\,ms & 84.4\textsuperscript{\dag}  & 0.2  & Below \\
2   & 134\,ms   & 19.0\,ms & 53.6\textsuperscript{\dag}  & 0.4  & Below \\
4   & 145\,ms   & 19.2\,ms & 31.6\textsuperscript{\dag}  & 0.7  & Below \\
8   & 170\,ms   & 21.9\,ms & 19.0  & 1.3  & Below \\
16  & 181\,ms   & 21.4\,ms & 19.4  & 2.5  & Below \\
32  & 278\,ms   & 21.9\,ms & 19.5  & 4.9  & Transition\textsuperscript{\ddag} \\
48  & 366\,ms   & 22.2\,ms & 19.3  & 7.4  & Transition\textsuperscript{\ddag} \\
64  & 739\,ms   & 22.7\,ms & 18.7  & 10.2 & Transition\textsuperscript{\ddag} \\
96  & 544\,ms   & 22.5\,ms & 18.7  & 15.3 & Transition\textsuperscript{\ddag} \\
\textbf{128} & \textbf{16.2\,s} & \textbf{22.6\,ms} & \textbf{27.9} & 16.5 & \textbf{Saturated} \\
192 & 45.7\,s   & 22.4\,ms & 70.2  & 17.1 & Saturated \\
256 & 83.2\,s   & 22.3\,ms & 176.9 & 17.6 & Saturated \\
384 & 109\,s    & 23.3\,ms & 283.6 & 18.0 & Saturated \\
512 & 113\,s    & 22.2\,ms & 248.2 & 18.0 & Saturated \\
\bottomrule
\end{tabular}
\end{table}

\paragraph{Asymmetric P/D saturation.}
The dominant finding is extreme P/D asymmetry (Figure~\ref{fig:ttft-itl}), confirming Section~\ref{sec:sat-coupled}: TTFT P99 grows from 74\,ms to 113\,s ($1{,}500\times$) while ITL P99 stays at $21.7 \pm 1.3$\,ms across all 14 concurrency levels, consistent with prefill being the dominant bottleneck and decode behaving as memory-bandwidth-bound on this workload.
The 70B model corroborates this with ITL P99 flat at ${\sim}10\text{--}14$\,ms and identical TTFT explosion at saturation.

\begin{figure}[t]
\centering
\begin{tikzpicture}
\begin{axis}[
  width=0.9\columnwidth,
  height=6.5cm,
  xlabel={Concurrency},
  ylabel={P99 Latency (ms)},
  xmode=log,
  log basis x=2,
  ymode=log,
  xtick={1,2,4,8,16,32,64,128,256,512},
  xticklabels={1,2,4,8,16,32,64,128,256,512},
  ymin=5, ymax=200000,
  legend style={at={(0.03,0.97)}, anchor=north west, font=\small,
    fill=white, fill opacity=0.8, draw opacity=1, text opacity=1},
  grid=major,
  grid style={gray!20},
]
  % TTFT P99 — 340B (TP=8, 1P/2D) — primary
  \addplot[line width=0.9pt, gOne, mark=*, mark size=2pt] coordinates {
    (1,74) (2,134) (4,145) (8,170) (16,181) (32,278)
    (48,366) (64,739) (96,544) (128,16178) (192,45718)
    (256,83248) (384,109104) (512,112855)
  };
  \addlegendentry{TTFT (340B)}

  % ITL P99 — 340B (TP=8, 1P/2D) — primary
  \addplot[line width=0.9pt, gThree, mark=*, mark size=2pt] coordinates {
    (1,19.8) (2,19.0) (4,19.2) (8,21.9) (16,21.4) (32,21.9)
    (48,22.2) (64,22.7) (96,22.5) (128,22.6) (192,22.4)
    (256,22.3) (384,23.3) (512,22.2)
  };
  \addlegendentry{ITL (340B)}

  % TTFT P99 — 70B 1P/2D (corroboration)
  \addplot[line width=0.9pt, gOne, mark=triangle*, mark size=2pt, densely dashed] coordinates {
    (1,69) (2,60) (4,72) (8,77) (16,93) (32,167) (48,190)
    (64,250) (96,354) (128,10000) (192,67200) (256,103148)
    (384,96000) (512,45800)
  };
  \addlegendentry{TTFT (70B)}

  % ITL P99 — 70B 1P/2D (corroboration)
  \addplot[line width=0.9pt, gThree, mark=triangle*, mark size=2pt, densely dashed] coordinates {
    (1,11) (2,11) (4,12) (8,13) (16,12) (32,12) (48,12)
    (64,13) (96,13) (128,14) (192,13) (256,13) (384,13) (512,14)
  };
  \addlegendentry{ITL (70B)}

  % Saturation knee (first post-knee grid point; true knee lies in (96, 128])
  \draw[densely dashed, black!45, line width=0.7pt] (axis cs:128,5) -- (axis cs:128,200000);
  \node[font=\scriptsize\itshape, color=black!55, anchor=south west] at (axis cs:132,80000) {Saturation};
  \node[font=\scriptsize\itshape, color=black!55, anchor=north west] at (axis cs:132,80000) {knee};
\end{axis}
\end{tikzpicture}
\caption{Asymmetric P/D saturation: TTFT P99 versus ITL P99 for the 340B (solid) and 70B (dashed) models. Below saturation, both metrics stay at tens of milliseconds. At saturation, TTFT explodes to $\sim$100\,s while ITL remains flat, nearly four orders of magnitude apart. The 340B has a higher ITL baseline ($\sim$22\,ms vs.\ $\sim$13\,ms) reflecting its larger decode memory footprint, and a lower throughput ceiling ($\sim$18 vs.\ $\sim$47\,rps). Both models exhibit the same asymmetry, consistent with prefill as the dominant bottleneck on this workload across both model scales. The ``saturation knee'' marker indicates the first post-knee grid point ($C=128$); the true knee lies in $(96, 128]$ and is not resolvable at our grid spacing.}
\label{fig:ttft-itl}
\end{figure}

\paragraph{Throughput ceiling.}
Throughput saturates at $\sim$18\,rps for the 340B (16.5\,rps at $C=128$ vs.\ 18.0\,rps at $C=512$) and $\sim$47\,rps for the 70B, the prefill-limited ceiling.
The $2.6\times$ lower ceiling for the $4.9\times$ larger model reflects the partial offset of FP8 quantization against the increased compute cost.

\paragraph{Three PoA regimes.}
The $\widehat{\text{PoA}}$ measurements reveal three distinct regimes (Figure~\ref{fig:poa-transition}):

\begin{enumerate}[nosep]
  \item \textbf{Low load ($C = 1\text{--}4$):} $\widehat{\text{PoA}} \approx$ 31--84 (340B) or 6--36 (70B). These elevated values are artifacts: with few in-flight requests, the Hungarian algorithm finds a trivially optimal assignment that diverges from the actual routing by the full single-request cost, inflating the index because the denominator is artificially small. We de-emphasize these rows in aggregate claims (see dagger-marked entries in Table~\ref{tab:equilibrium} footnote).
  \item \textbf{Below saturation ($C = 8\text{--}96$):} $\widehat{\text{PoA}} \approx$ 18--19 (340B), 6--7.7 (70B 1P/2D), or $\sim$14.7 (70B 1P/5D), a stable plateau. The index reflects the structural cost of greedy routing and does not grow with load. This stability is consistent with Proposition~\ref{prop:poa-divergence}(i). The $2.5\times$ higher plateau for the 340B (18.7 vs.\ 7.47) reflects the larger cost of routing suboptimality with a bigger model, since each misrouted request wastes more compute.
  \item \textbf{At saturation ($C \geq 128$):} $\widehat{\text{PoA}}$ grows rapidly. The 340B peaks at $C=384$ ($\widehat{\text{PoA}} = 284$). On the 70B, the 1P/2D topology peaks at $C=384$ ($\widehat{\text{PoA}} = 199$) while the 1P/5D topology peaks earlier at $C=256$ ($\widehat{\text{PoA}} = 309$). The index decreases at $C=512$ as extreme queuing paradoxically equalizes load.
\end{enumerate}

The transition from stable to rapidly growing PoA occurs at $C=128$ for both models: on the 340B, TTFT P99 jumps from 544\,ms to 16.2\,s (${\sim}30\times$); on the 70B, from 354\,ms to 10.0\,s.
(The non-monotonic 340B TTFT P99 at $C=96$, 544\,ms, below the $C=64$ value of 739\,ms, is a single-run measurement artifact; the PoA is stable at this level.)
As discussed in Section~\ref{sec:saturation}, the deep-saturation $\widehat{\text{PoA}}$ should be decomposed: routing inefficiency (bounded, on the decode side) and resource allocation failure (dominant, the prefill bottleneck).
A $\widehat{\text{PoA}}$ of 284 at $C=384$ does not mean routing is $284\times$ suboptimal; it means the system is in overload and no routing strategy can compensate for insufficient prefill capacity.

\paragraph{Cross-model and topology consistency.}
The three-regime structure is consistent across both models and topologies (Figure~\ref{fig:poa-transition}).
On the 70B 1P/5D topology ($m = 5$), the $\widehat{\text{PoA}}$ curve has the same qualitative shape as 1P/2D---stable plateau, same post-knee grid point, and explosive growth---at a higher level throughout: the plateau sits at $\sim$14.7 ($\approx 2.4\times$ the 1P/2D 6.2--7.7), reflecting the larger 5-worker routing game, and the throughput ceiling rises to $\sim$51\,rps. At saturation the gap widens: $\widehat{\text{PoA}}$ is $3.3\times$ higher at $C=128$ (57.3 vs.\ 17.6) and $1.9\times$ higher at $C=256$ (309 vs.\ 167, both on the saturation-sweep grid), consistent with the larger routing game amplifying saturation-regime inefficiency.
The 340B at 1P/2D (TP=8, cross-InfiniBand KV transfers) reproduces the identical regime structure at a $2.5\times$ higher plateau, confirming that the game-theoretic regime structure is consistent across the tested model scales.

Table~\ref{tab:cross-model} summarizes the key cross-model comparison.

\begin{table}[t]
\centering
\caption{Cross-model properties not visible in Figures~\ref{fig:ttft-itl} and~\ref{fig:poa-transition}. \textsuperscript{*}Finite difference taken over the $[64, 128]$ interval; grid spacing cannot resolve the true knee location within $(96, 128]$.}
\label{tab:cross-model}
\small
\begin{tabular}{@{}lrrr@{}}
\toprule
\headrow \textbf{Property} & \textbf{340B (TP=8)} & \textbf{70B (TP=4)} & \textbf{Ratio} \\
\midrule
First post-knee grid point & $C=128$ & $C=128$ & Same \\
Throughput ceiling & 18\,rps & 47\,rps & 0.39$\times$ \\
$\Delta\text{TTFT}/\Delta C$ across knee\textsuperscript{*} & $\sim$0.55 & $\sim$0.46 & $\sim$Same \\
\bottomrule
\end{tabular}
\end{table}

\begin{figure}[t]
\centering
\begin{tikzpicture}
\begin{axis}[
  width=0.9\columnwidth,
  height=6.5cm,
  xlabel={Concurrency},
  ylabel={Estimated Price of Anarchy},
  xmode=log,
  log basis x=2,
  xtick={1,2,4,8,16,32,64,128,256,512},
  xticklabels={1,2,4,8,16,32,64,128,256,512},
  ymode=log,
  ymin=1, ymax=400,
  legend style={at={(0.03,0.97)}, anchor=north west, font=\small},
  grid=major,
  grid style={gray!20},
]
  % Measured PoA — 340B 1P/2D (primary)
  \addplot[line width=0.9pt, gOne, mark=*, mark size=2pt] coordinates {
    (1, 84.4) (2, 53.6) (4, 31.6) (8, 19.0) (16, 19.4) (32, 19.5)
    (48, 19.3) (64, 18.7) (96, 18.7) (128, 27.9) (192, 70.2)
    (256, 176.9) (384, 283.6) (512, 248.2)
  };
  \addlegendentry{340B 1P/2D}

  % Measured PoA — 70B 1P/2D (corroboration)
  \addplot[line width=0.9pt, gThree, mark=triangle*, mark size=2pt, densely dashed] coordinates {
    (1, 15.0) (2, 8.3) (4, 6.1) (8, 6.2) (16, 6.2) (32, 6.4)
    (48, 6.9) (64, 7.5) (96, 7.6) (128, 17.7) (192, 117.6)
    (256, 183.0) (384, 199.0) (512, 115.4)
  };
  \addlegendentry{70B 1P/2D}

  % Measured PoA — 70B 1P/5D (corroboration; Experiment 2 grid)
  \addplot[line width=0.9pt, gTwo, mark=diamond*, mark size=2pt, densely dotted] coordinates {
    (1, 35.5) (4, 14.6) (8, 14.7) (16, 14.7) (32, 14.7)
    (64, 14.9) (128, 57.3) (256, 309.0) (512, 266.5)
  };
  \addlegendentry{70B 1P/5D}

  % Saturation knee annotation (first post-knee grid point; true knee lies in (96, 128])
  \draw[densely dashed, black!45, line width=0.7pt] (axis cs:128,1) -- (axis cs:128,400);
  \node[font=\scriptsize\itshape, color=black!55, anchor=south west] at (axis cs:132,200) {Saturation};
  \node[font=\scriptsize\itshape, color=black!55, anchor=north west] at (axis cs:132,200) {knee};
\end{axis}
\end{tikzpicture}
\caption{Estimated Price of Anarchy versus concurrency (log-log scale) for the 340B 1P/2D (solid), 70B 1P/2D (dashed), and 70B 1P/5D (dotted) configurations. Both models exhibit the same three-regime structure: a stable plateau below saturation, a sharp jump at the first post-knee grid point ($C = 128$), and rapid growth to $\sim$200--309 deep in saturation ($C = 256$--$384$). The 340B plateau (18.7) is $\sim$2.5$\times$ higher than the 70B 1P/2D (7.47), reflecting the larger cost of routing suboptimality with a bigger model; the 70B 1P/5D plateau ($\sim$14.7) sits $\approx 2.4\times$ above 1P/2D, reflecting the larger 5-worker routing game. The 1P/5D series is measured on the coarser saturation-sweep grid $\{1, 4, 8, \ldots, 512\}$. The true knee location is not resolvable within $(96, 128]$ at our grid spacing.}
\label{fig:poa-transition}
\end{figure}

\subsection{Experiment 2: Saturation Regime Detection}
\label{sec:res-saturation}

On the 340B model, the finite difference $\Delta = \Delta(\text{TTFT P99})/\Delta C$ taken across the knee jumps from 0.012 on the $[32, 64]$ interval to 0.554 on the $[64, 128]$ interval, a ${\sim}45\times$ increase.
On the 70B (1P/2D), the same transition produces a $148\times$ jump (from 0.003 to 0.458).
Because the concurrency grid is $\{1, 4, 8, 16, 32, 64, 128, 256, 512\}$, this quantity is a difference quotient that straddles the knee rather than a localized derivative \emph{at} the knee; a true knee anywhere in $(64, 128]$ would produce similar values.
Within that caveat, the magnitude ($\sim$0.45--0.56) is consistent across both models and topologies, and the saturation signal is unambiguous in all cases.
A denser sweep in $(96, 128]$ would be needed to localize the knee precisely; we report the coincidence as ``same first post-knee grid point'' rather than ``same knee.''

The SaturationDetector's $\theta_1 = 300$\,ms threshold, calibrated on the 70B, triggers prematurely on the 340B: at $C = 32$, the EWMA of the streamed frontend TTFT P99 signal reached ${\sim}0.37$\,s during the hold (the table reports the client-side aggregate P99, 278\,ms), crossing $\theta_1$ and producing a \textsc{Transition} classification that is too early.
This occurs because the 340B's baseline TTFT ($\sim$150--200\,ms) is higher than the 70B's ($\sim$55\,ms).
For the 340B, we recommend $\theta_1 = 1.0$\,s and $\theta_2 = 10.0$\,s; these thresholds should be set as a fraction of the model's baseline TTFT rather than as absolute values.

The below-saturation PoA is $\sim$19 on the 340B (vs.\ $\sim$6--7 on the 70B), consistent with the cross-model plateau gap observed in Section~\ref{sec:res-equilibrium}: each misrouted request on the larger model wastes more compute.
On the 70B, the detector correctly classified all below-saturation levels as \textsc{Below} and detected the transition at $C = 128$, with the EWMA reaching \textsc{Saturated} within 15\,s (3 samples at the 5\,s polling interval).

\subsection{Experiment 4a: Pareto Sweep Below Saturation}
\label{sec:res-pareto-below}

Table~\ref{tab:pareto-c64} presents the full $4 \times 4$ parameter sweep at $C = 64$ for the 340B model (shown alongside the $C = 128$ sweeps in Table~\ref{tab:pareto-sweeps}).

\paragraph{PoA invariance.}
The central finding is that PoA is \emph{statistically indistinguishable} across router parameters below saturation: $\widehat{\text{PoA}} = 18.7 \pm 0.10$ across all 16 $(\tau, \omega)$ configurations on the 340B (cross-configuration spread; single trial per cell).
Neither $\tau$ nor $\omega$ produces a measurable effect on routing efficiency.
The 70B corroborates this: $\widehat{\text{PoA}} = 7.47 \pm 0.08$ across 16 configurations on 1P/2D, and $14.93 \pm 0.06$ on 1P/5D, the higher plateau reflecting the larger 5-worker routing game.
The pattern holds across model scale, TP configuration, and number of decode workers.

\paragraph{Implications.}
At below-saturation load, router-parameter tuning yields no measurable PoA gain---tuning effort is best spent at the saturation knee or above.
The structural inefficiency relative to the hindsight-optimal assignment ($\sim$19$\times$ for the 340B, $\sim$7$\times$ for the 70B 1P/2D, $\sim$15$\times$ for the 70B 1P/5D) is a cost of greedy routing that no parameter configuration can reduce.
Reducing the PoA below saturation requires changing the assignment algorithm from greedy to coordinated batch assignment.

\paragraph{Latency and throughput.}
While PoA is invariant, TTFT P99 shows mild variation without systematic dependence on $\tau$ or $\omega$.
ITL P99 is stable at 19--23\,ms (340B) and 11--13\,ms (70B) across all configurations.
Throughput is uniformly at the prefill-limited ceiling for each model.

\subsection{Experiment 4b: Pareto Sweep at Saturation}
\label{sec:res-pareto-saturated}

Panels~(b) and~(c) of Table~\ref{tab:pareto-sweeps} present the $4 \times 4$ parameter sweeps at $C = 128$ (saturation knee) for both models.

\begin{table}[t]
\centering
\setlength{\tabcolsep}{3.5pt}
\caption{$4 \times 4$ Pareto sweeps of $\widehat{\text{PoA}}$ across $(\tau, \omega)$. (a)~340B at $C = 64$ (below saturation, $\sim$12\,rps): all 16 cells lie within $\pm 0.2$ of 18.7---no measurable parameter effect. (b)~340B at $C = 128$ (saturation, $\sim$18.9\,rps): $1.6\times$ spread, noisy but unstructured. (c)~Llama-3.1-70B 1P/2D at $C = 128$ (saturation, $\sim$47\,rps): $1.9\times$ spread with clearer structure. All configurations achieve 100\% NATS match rate. Bold values mark per-panel minima in saturated regimes; below-saturation panel~(a) has no meaningful minimum.}
\label{tab:pareto-sweeps}
\begin{subtable}[t]{0.30\textwidth}
\centering
\caption{340B, $C = 64$ (below sat.)}
\label{tab:pareto-c64}
\begin{tabular}{@{}r|cccc@{}}
\toprule
\headrow $\tau \backslash \omega$ & 0.0 & 0.3 & 0.7 & 1.0 \\
\midrule
0.0 & 18.7 & 18.6 & 18.7 & 18.7 \\
0.3 & 18.6 & 18.6 & 18.5 & 18.7 \\
0.7 & 18.7 & 18.7 & 18.9 & 18.7 \\
1.0 & 18.6 & 18.8 & 18.6 & 18.6 \\
\bottomrule
\end{tabular}
\end{subtable}\hfill
\begin{subtable}[t]{0.30\textwidth}
\centering
\caption{340B, $C = 128$ (sat.)}
\label{tab:pareto-c128-340b}
\begin{tabular}{@{}r|cccc@{}}
\toprule
\headrow $\tau \backslash \omega$ & 0.0 & 0.3 & 0.7 & 1.0 \\
\midrule
0.0 & 31.9 & 34.6 & 32.4 & 36.0 \\
0.3 & 36.5 & 42.5 & 35.7 & \textbf{26.6} \\
0.7 & 32.0 & 29.4 & 34.2 & 34.4 \\
1.0 & 37.5 & 32.7 & 34.0 & 38.5 \\
\bottomrule
\end{tabular}
\end{subtable}\hfill
\begin{subtable}[t]{0.30\textwidth}
\centering
\caption{70B 1P/2D, $C = 128$ (sat.)}
\label{tab:pareto-c128-70b}
\begin{tabular}{@{}r|cccc@{}}
\toprule
\headrow $\tau \backslash \omega$ & 0.0 & 0.3 & 0.7 & 1.0 \\
\midrule
0.0 & 19.6 & 18.4 & 25.4 & 19.1 \\
0.3 & 27.5 & 20.6 & \textbf{14.6} & 15.6 \\
0.7 & 28.2 & 26.4 & 24.6 & 20.5 \\
1.0 & 21.2 & 25.7 & 27.2 & 15.4 \\
\bottomrule
\end{tabular}
\end{subtable}
\end{table}

\paragraph{Parameter sensitivity depends on routing game size.}
The two models reveal a nuanced interaction between parameter sensitivity and the number of decode workers.
On the 70B with 1P/2D ($m = 2$ workers), clear parameter sensitivity emerges at saturation: PoA ranges from 14.6 to 28.2 ($1.9\times$ spread, mean $21.9 \pm 4.6$).
On the 340B with 1P/2D ($m = 2$, TP=8), parameter sensitivity is moderate: $1.6\times$ spread (mean $34.3 \pm 3.7$), with variation that is noisy but unstructured.
On the 70B 1P/5D ($m = 5$), the spread is $2.0\times$ (mean $57.8 \pm 10.6$), confirming that more decode workers amplify the routing game's parameter sensitivity at saturation.

\paragraph{Variance increase across configurations.}
Despite different sensitivity magnitudes, both models show the same qualitative transition: cross-configuration variance increases sharply from below saturation to at saturation.
On the 340B, $\sigma$ grows ${\sim}37\times$ (from $\pm 0.10$ at $C = 64$ to $\pm 3.7$ at $C = 128$).
On the 70B 1P/2D, $\sigma$ grows ${\sim}58\times$ (from $\pm 0.08$ to $\pm 4.6$).
This variance jump across configurations is a candidate saturation signal; confirmation as a robust trigger requires repeated trials per cell.

\paragraph{Throughput invariance.}
On both models, throughput is constant across all 16 configurations at the prefill-limited ceiling ($\sim$18.9\,rps for 340B, $\sim$47\,rps for 70B).
Routing parameters affect \emph{how} requests are distributed across decode workers (and hence tail latency), but cannot increase aggregate throughput beyond the prefill bottleneck.
On the 70B 1P/2D, TTFT P99 varies from 6.7\,s to 25.6\,s across configurations; on the 340B, from 24.7\,s to 45.6\,s.

Figure~\ref{fig:pareto-heatmap} visualizes the contrast: below saturation, the parameter landscape is uniformly flat; at saturation, it becomes rugged (70B) or noisy (340B).

\paragraph{Empirical validation of the cache placement game (Game~2).}
The overlap weight $\omega$ controls Game~2's influence on routing: $\omega = 0$ disables cache affinity (pure congestion game), while $\omega = 1$ maximizes it.
Slicing the Pareto sweep along the $\omega$ dimension isolates the cache game's effect.
\emph{Below saturation} ($C = 64$), PoA is invariant to $\omega$ on both models (CV $< 0.5\%$), confirming Proposition~\ref{prop:kv-poa-transition}: with spare capacity, selfish cache-aware routing is as efficient as cache-oblivious routing, and the cache game's PoA contribution is negligible.
\emph{At saturation} ($C = 128$), the cache game creates a measurable tradeoff on the 70B 1P/2D: averaging across $\tau$, TTFT P99 drops $1.4\times$ as $\omega$ increases from 0 to 1 (23.7\,s $\to$ 17.4\,s), because cache-affine routing reduces redundant prefill recomputation.
However, PoA does not decrease monotonically with $\omega$, because concentrating requests on cache-warm workers creates load imbalance; the congestion externality offsets the cache benefit, exactly the tradeoff predicted by the coupling between Games~2 and~3 (Section~\ref{sec:coupling}).
On the 340B ($m = 2$), TTFT P99 is invariant to $\omega$ at saturation ($\sim$30\,s across all values), consistent with the constrained routing game having too few workers for cache affinity to differentiate outcomes.

\begin{figure}[t]
\centering
\begin{tikzpicture}[every node/.style={font=\small}]
\def\s{0.85}% cell size in cm (shrunk so three panels fit side-by-side)
\def\hc#1#2#3#4#5{% col, row, value, sienna-intensity, text-color
  \fill[gOne!#4!white] ({#1*\s},{#2*\s}) rectangle ++(\s,\s);%
  \draw[black!30, line width=0.3pt] ({#1*\s},{#2*\s}) rectangle ++(\s,\s);%
  \node[text=#5, font=\scriptsize] at ({#1*\s+\s/2},{#2*\s+\s/2}) {#3};%
}
% --- Panel (a): 340B, C = 64 ---
\begin{scope}
  \node[font=\footnotesize\bfseries, anchor=south] at ({2*\s},{4*\s+0.25}) {(a) 340B, $C = 64$};
  \foreach \x/\lbl in {0/0.0, 1/0.3, 2/0.7, 3/1.0}
    \node[font=\tiny] at ({\x*\s+\s/2},{4*\s+0.1}) {\lbl};
  \node[font=\scriptsize] at ({2*\s},{4*\s+0.45}) {$\omega$};
  \foreach \y/\lbl in {3/0.0, 2/0.3, 1/0.7, 0/1.0}
    \node[font=\tiny, anchor=east] at (-0.1,{\y*\s+\s/2}) {\lbl};
  \node[font=\scriptsize, rotate=90, anchor=south] at (-0.5,{2*\s}) {$\tau$};
  % tau=0.0
  \hc{0}{3}{18.7}{15}{black}\hc{1}{3}{18.6}{14}{black}\hc{2}{3}{18.7}{15}{black}\hc{3}{3}{18.7}{15}{black}
  % tau=0.3
  \hc{0}{2}{18.6}{14}{black}\hc{1}{2}{18.6}{14}{black}\hc{2}{2}{18.5}{14}{black}\hc{3}{2}{18.7}{15}{black}
  % tau=0.7
  \hc{0}{1}{18.7}{15}{black}\hc{1}{1}{18.7}{15}{black}\hc{2}{1}{18.9}{15}{black}\hc{3}{1}{18.7}{15}{black}
  % tau=1.0
  \hc{0}{0}{18.6}{14}{black}\hc{1}{0}{18.8}{15}{black}\hc{2}{0}{18.6}{14}{black}\hc{3}{0}{18.6}{14}{black}
\end{scope}
% --- Panel (b): 340B, C = 128 ---
\begin{scope}[xshift=5.1cm]
  \node[font=\footnotesize\bfseries, anchor=south] at ({2*\s},{4*\s+0.25}) {(b) 340B, $C = 128$};
  \foreach \x/\lbl in {0/0.0, 1/0.3, 2/0.7, 3/1.0}
    \node[font=\tiny] at ({\x*\s+\s/2},{4*\s+0.1}) {\lbl};
  \node[font=\scriptsize] at ({2*\s},{4*\s+0.45}) {$\omega$};
  \foreach \y/\lbl in {3/0.0, 2/0.3, 1/0.7, 0/1.0}
    \node[font=\tiny, anchor=east] at (-0.1,{\y*\s+\s/2}) {\lbl};
  \node[font=\scriptsize, rotate=90, anchor=south] at (-0.5,{2*\s}) {$\tau$};
  % tau=0.0
  \hc{0}{3}{31.9}{62}{white}\hc{1}{3}{34.6}{72}{white}\hc{2}{3}{32.4}{64}{white}\hc{3}{3}{36.0}{77}{white}
  % tau=0.3
  \hc{0}{2}{36.5}{78}{white}\hc{1}{2}{42.5}{100}{white}\hc{2}{2}{35.7}{76}{white}\hc{3}{2}{\textbf{26.6}}{43}{black}
  % tau=0.7
  \hc{0}{1}{32.0}{62}{white}\hc{1}{1}{29.4}{53}{white}\hc{2}{1}{34.2}{70}{white}\hc{3}{1}{34.4}{71}{white}
  % tau=1.0
  \hc{0}{0}{37.5}{82}{white}\hc{1}{0}{32.7}{65}{white}\hc{2}{0}{34.0}{70}{white}\hc{3}{0}{38.5}{86}{white}
\end{scope}
% --- Panel (c): 70B, C = 128 ---
\begin{scope}[xshift=10.2cm]
  \node[font=\footnotesize\bfseries, anchor=south] at ({2*\s},{4*\s+0.25}) {(c) 70B, $C = 128$};
  \foreach \x/\lbl in {0/0.0, 1/0.3, 2/0.7, 3/1.0}
    \node[font=\tiny] at ({\x*\s+\s/2},{4*\s+0.1}) {\lbl};
  \node[font=\scriptsize] at ({2*\s},{4*\s+0.45}) {$\omega$};
  \foreach \y/\lbl in {3/0.0, 2/0.3, 1/0.7, 0/1.0}
    \node[font=\tiny, anchor=east] at (-0.1,{\y*\s+\s/2}) {\lbl};
  \node[font=\scriptsize, rotate=90, anchor=south] at (-0.5,{2*\s}) {$\tau$};
  % tau=0.0
  \hc{0}{3}{19.6}{18}{black}\hc{1}{3}{18.4}{14}{black}\hc{2}{3}{25.4}{39}{black}\hc{3}{3}{19.1}{16}{black}
  % tau=0.3
  \hc{0}{2}{27.5}{46}{black}\hc{1}{2}{20.6}{22}{black}\hc{2}{2}{\textbf{14.6}}{0}{black}\hc{3}{2}{15.6}{4}{black}
  % tau=0.7
  \hc{0}{1}{28.2}{49}{black}\hc{1}{1}{26.4}{42}{black}\hc{2}{1}{24.6}{36}{black}\hc{3}{1}{20.5}{21}{black}
  % tau=1.0
  \hc{0}{0}{21.2}{24}{black}\hc{1}{0}{25.7}{40}{black}\hc{2}{0}{27.2}{45}{black}\hc{3}{0}{15.4}{3}{black}
\end{scope}
\end{tikzpicture}
\caption{$\widehat{\text{PoA}}$ heatmaps for the $4 \times 4$ parameter sweep, shown on a \emph{shared} color scale (PoA 14.6 $\to$ 42.5 spans the full intensity range). (a)~340B below saturation ($C = 64$): all 16 configurations yield $\widehat{\text{PoA}} = 18.7 \pm 0.10$, appearing uniformly light. (b)~340B at saturation ($C = 128$): $\widehat{\text{PoA}} = 34.3 \pm 3.7$, almost every cell deeply tinted ($1.6\times$ spread; moderate, unstructured variation). (c)~70B 1P/2D at saturation ($C = 128$): $\widehat{\text{PoA}}$ ranges from 14.6 to 28.2 ($1.9\times$ spread)---visibly rugged within the panel yet still cooler than (b) on the shared scale, showing that the 70B 1P/2D at saturation is more parameter-sensitive but absolutely more efficient than the 340B. Bold values mark minima.}
\label{fig:pareto-heatmap}
\end{figure}

\subsection{Experiment 3: Adaptive versus Static Routing}
\label{sec:res-adaptive}

Experiment~3 tests whether the adaptive controller can exploit regime-dependent routing under a realistic load spike ($C = 32 \to 128 \to 32$, 120/180/120\,s, 3 iterations per strategy).
We report the 340B (1P/2D) and 70B (1P/5D) configurations in detail.
The 70B 1P/2D configuration was also tested ($n=3$): the adaptive controller reduced PoA from $23.1 \pm 1.6$ to $10.7 \pm 0.5$ ($2.2\times$) and TTFT P99 from $25.9 \pm 2.1$\,s to $3.4 \pm 0.4$\,s ($7.6\times$), the strongest TTFT improvement across all configurations, achieved with the same $(\tau=0.7, \omega=1.0)$ parameter setting.
Switch latency was $48.6 \pm 13.0$\,s.
Two strategies are compared:
\begin{itemize}[nosep]
  \item \textbf{Static}: default parameters ($\tau = 0$, $\omega = 1$) throughout all phases.
  \item \textbf{Adaptive}: the SaturationDetector monitors TTFT P99 via 5\,s Prometheus polling with EWMA; upon detecting \textsc{transition}, a zero-downtime port switch redirects traffic from a default-parameter frontend (port~8000) to a pre-started optimal-parameter frontend (port~8001, $\tau = 0.7$, $\omega = 1.0$). No frontend restart is required.
\end{itemize}

\paragraph{Results.}
Tables~\ref{tab:adaptive-340b} and~\ref{tab:adaptive-70b} summarize the per-phase metrics for both models.

\begin{table}[t]
\centering
\small
\caption{Experiment~3: Adaptive vs.\ static routing, Nemotron-4-340B (1P/2D, mean $\pm$ sample std across 3 iterations).}
\label{tab:adaptive-340b}
\begin{tabular}{llrrrr}
\toprule
\headrow \textbf{Strategy} & \textbf{Phase} & \textbf{$\widehat{\text{PoA}}$} & \textbf{TTFT P99 (s)} & \textbf{ITL P99 (ms)} & \textbf{rps} \\
\midrule
Static   & Below     & $19.51 \pm 0.07$ & $0.29$ & $22.3$ & 5.5 \\
Static   & Saturated & $33.15 \pm 1.14$ & $28.26 \pm 5.9$ & $22.7$ & 18.2 \\
Static   & Recovery  & $19.52 \pm 0.04$ & $0.30$ & $23.8$ & 5.5 \\
\midrule
Adaptive & Below     & $19.55 \pm 0.02$ & $0.29$ & $22.5$ & 5.5 \\
Adaptive & Saturated & $\mathbf{25.78 \pm 0.07}$ & $\mathbf{5.92 \pm 0.06}$ & $22.4$ & 11.6 \\
Adaptive & Recovery  & $19.55 \pm 0.08$ & $0.57$ & $22.8$ & 5.5 \\
\bottomrule
\end{tabular}
\end{table}

\begin{table}[t]
\centering
\small
\caption{Experiment~3: Adaptive vs.\ static routing, Llama-3.1-70B (1P/5D, mean $\pm$ sample std across 3 iterations).}
\label{tab:adaptive-70b}
\begin{tabular}{llrrrr}
\toprule
\headrow \textbf{Strategy} & \textbf{Phase} & \textbf{$\widehat{\text{PoA}}$} & \textbf{TTFT P99 (s)} & \textbf{ITL P99 (ms)} & \textbf{rps} \\
\midrule
Static   & Below     & $14.72 \pm 0.04$ & $0.11$\textsuperscript{\dag} ($n{=}2$) & $9.9$ & 16.7 \\
Static   & Saturated & $66.42 \pm 12.2$ & $8.19 \pm 5.6$ & $11.5$ & 50.9 \\
Static   & Recovery  & $14.68 \pm 0.04$ & $0.094$ & $9.9$ & 16.8 \\
\midrule
Adaptive & Below     & $14.66 \pm 0.03$ & $0.106$ & $10.0$ & 16.8 \\
Adaptive & Saturated & $\mathbf{21.53 \pm 0.17}$ & $\mathbf{4.22 \pm 0.05}$ & $11.3$ & 44.3 \\
Adaptive & Recovery  & $14.75 \pm 0.11$ & $0.128$ & $9.9$ & 16.8 \\
\bottomrule
\end{tabular}
\end{table}

Figure~\ref{fig:adaptive-poa} visualizes the PoA comparison across phases for both models.

\begin{figure}[t]
\centering
\begin{tikzpicture}
\begin{axis}[
  ybar,
  bar width=10pt,
  width=0.95\columnwidth,
  height=5.5cm,
  ylabel={$\widehat{\text{PoA}}$},
  symbolic x coords={Below,Saturated,Recovery},
  xtick=data,
  ymin=0, ymax=70,
  ytick={0,10,20,30,40,50,60,70},
  legend style={at={(0.98,0.98)}, anchor=north east, font=\scriptsize},
  nodes near coords,
  nodes near coords style={font=\tiny, above},
  every node near coord/.append style={/pgf/number format/fixed, /pgf/number format/precision=1},
  enlarge x limits=0.3,
  grid=major,
  grid style={gray!30},
]
\addplot[fill=gOne!55!white, draw=gOne, line width=0.4pt] coordinates {(Below,19.51) (Saturated,33.15) (Recovery,19.52)};
\addplot[fill=gOne!20!white, draw=gOne, line width=0.4pt, postaction={pattern=north east lines, pattern color=gOne!70}] coordinates {(Below,19.55) (Saturated,25.78) (Recovery,19.55)};
\addplot[fill=gThree!55!white, draw=gThree, line width=0.4pt] coordinates {(Below,14.72) (Saturated,66.42) (Recovery,14.68)};
\addplot[fill=gThree!20!white, draw=gThree, line width=0.4pt, postaction={pattern=north east lines, pattern color=gThree!70}] coordinates {(Below,14.66) (Saturated,21.53) (Recovery,14.75)};
\legend{340B Static, 340B Adaptive, 70B Static, 70B Adaptive}
\end{axis}
\end{tikzpicture}
\caption{PoA comparison under load spike (mean of 3 iterations each). Below saturation and in recovery, both strategies produce identical PoA on each model. At saturation, the adaptive controller reduces PoA on the 340B ($33.15 \to 25.78$, 22\%) and on the 70B ($66.42 \to 21.53$, $3.1\times$). The 70B benefits more from parameter switching because its 5-worker routing game ($m = 5$) has a larger action space than the 340B's 2-worker game ($m = 2$).}
\label{fig:adaptive-poa}
\end{figure}

\paragraph{340B results.}
On the 340B, the adaptive controller reduces aggregate saturated-phase TTFT P99 by $\mathbf{4.8\times}$ ($28.26 \pm 5.9$\,s $\to$ $5.92 \pm 0.06$\,s) and $\widehat{\text{PoA}}$ by 22\% ($33.15 \to 25.78$).
The zero-downtime switch fires at $54.8 \pm 0.2$\,s into the 180\,s saturated phase, remarkably consistent across all 3 iterations, leaving $\sim$125\,s under the optimal frontend.
The aggregate number therefore mixes two regimes and understates the steady-state benefit; we report both:

\begin{itemize}[nosep]
  \item \textbf{Saturated-phase aggregate (headline):} TTFT P99 $28.26 \to 5.92$\,s, a $4.8\times$ reduction.
  \item \textbf{Post-switch steady state:} TTFT P99 $\sim$0.97\,s (versus the static $\sim$28.3\,s), a $\sim$29$\times$ reduction.
  \item \textbf{Pre-switch ($\sim$55\,s at default params):} TTFT P99 tracks the static baseline.
\end{itemize}

The aggregate understates the steady-state benefit by a factor of $\sim$6, while conversely the $\widehat{\text{PoA}}$ reduction (22\%) may partially reflect the transient (see saturated-phase discussion in Section~\ref{sec:disc-limitations}). Reported throughput costs refer to the saturated phase only: the adaptive strategy achieves 11.6\,rps vs.\ 18.2\,rps for static (a 36\% reduction), reflecting a shifted operating point that prioritizes latency over throughput at saturation.

The $\widehat{\text{PoA}}$ improvement is less dramatic than TTFT because with only 2 decode workers, the routing game is inherently constrained, and parameter tuning cannot differentiate outcomes as much as with more workers.
Even so, the TTFT improvement shows that moderate routing-efficiency gains translate to large user-facing latency benefits at saturation.

\paragraph{70B results.}
On the 70B 1P/5D ($m = 5$), the adaptive strategy reduces $\widehat{\text{PoA}}$ from $66.42 \pm 12.2$ to $\mathbf{21.53 \pm 0.17}$, a $\mathbf{3.1\times}$ reduction and our strongest headline improvement.
This is the statistically robust metric for this configuration: the static $\widehat{\text{PoA}}$ baseline has CV $= 18\%$ and the $3.1\times$ ratio does not straddle zero in any credible interval.
Mean TTFT P99 also drops ($8.19 \to 4.22$\,s, a $1.94\times$ reduction), but we demote this number in the headline framing: the static baseline individual iterations are 2.6\,s, 13.8\,s, 8.2\,s (CV $= 68\%$), and the $n{=}3$ 95\% $t$-CI spans $[-5.7, 22.1]$\,s, so the ratio is dominated by the middle iteration and an $n \geq 5$ rerun would be needed to report it with confidence.\footnote{One iteration of the 70B static below-saturation phase produced an anomalous TTFT P99 of 2.84\,s (vs.\ $\sim$0.1\,s in the other two), likely due to insufficient cool-down from the previous iteration's saturated phase. The below-saturation TTFT (\dag) in Table~\ref{tab:adaptive-70b} reports the mean of the two non-anomalous iterations (0.127\,s, 0.089\,s); including the anomalous iteration, the 3-iteration mean would be $1.02 \pm 1.58$\,s, dominated by the outlier. For symmetry, the corresponding 3-iteration $\widehat{\text{PoA}}$ under the full 3-iteration treatment is $14.73 \pm 0.06$ (unaffected by the TTFT outlier).}
The switch fired at $58.3 \pm 15.7$\,s into the saturated phase.
The larger routing game enables more differentiation between parameter configurations, yielding the stronger $\widehat{\text{PoA}}$ reduction.

\paragraph{Stability.}
On both models, the adaptive strategy has dramatically lower variance than static:
\begin{itemize}[nosep]
  \item 340B: $\sigma_{\text{PoA}} = 0.07$ (adaptive) vs.\ 1.14 (static); $\sigma_{\text{TTFT}} = 0.06$\,s vs.\ 5.9\,s.
  \item 70B: $\sigma_{\text{PoA}} = 0.17$ vs.\ 12.2; TTFT consistently $\sim$4.2\,s vs.\ highly variable static ($8.2 \pm 5.6$\,s).
\end{itemize}
The static strategy's high TTFT variance (21.5--31.8\,s on the 340B) reflects stochastic queue dynamics at the saturation knee. The adaptive controller eliminates this variance by preventing queue explosion via the optimal frontend.
In recovery, both strategies return to the below-saturation baseline on both models, confirming clean regime transitions.

\section{Discussion}
\label{sec:discussion}

\subsection{When Game Theory Applies}
\label{sec:disc-when}

Three assumptions of classical game theory are imperfect fits for LLM inference:

\paragraph{Rationality and the mechanism design interpretation.}
Inference requests are passive workloads, not strategic agents.
The mechanism-design framing in Section~\ref{sec:introduction} side-steps this: PoA measures the efficiency of the router-as-mechanism, requiring no rationality assumption on requests.
Strategic considerations re-enter at the \emph{tenant} level in multi-tenant clusters, where users have genuine incentives to game the allocator (Section~\ref{sec:disc-implications}).

\paragraph{Complete information.}
Output sequence lengths are unknown until generation completes, future arrival patterns are stochastic, and execution times depend on dynamic batch composition and memory pressure.
The games we formalize are therefore games of incomplete information.
Our controller addresses this pragmatically: rather than computing equilibria under uncertainty, it detects regime shifts from observable metrics and applies pre-computed parameter settings.

\paragraph{Static analysis.}
Classical game theory analyzes one-shot interactions, but inference serving is a continuous system where GPU memory, batch composition, and queue depths evolve at millisecond timescales.
Dynamic game models (differential games, repeated games with discounting) are the appropriate generalization.
Our saturation analysis (Section~\ref{sec:saturation}) takes a step in this direction by characterizing how the game's structure changes with load, but a full dynamic treatment remains future work.

\subsection{Scope and Boundaries}
\label{sec:disc-limitations}

This work studies two dense models on a 3-node cluster under a controlled workload.
Several limitations bound the generality of our findings.

\paragraph{Workload homogeneity.}
All experiments use a single workload profile: 5 prompt templates, 128 input tokens, 256 max output tokens, and deterministic generation (temperature~0.0).
With only 5 distinct prefixes and $m = 2$--$5$ workers, the KV cache prefix space is trivially partitionable, making the cache placement game (Game~2) degenerate: the parameter invariance observed below saturation in the Pareto sweep may partly reflect this trivial cache problem rather than a general property.
Furthermore, with 128-token inputs, KV cache blocks are small relative to the 192\,GB HBM3e capacity, so cache tier spillover (Game~2's multi-tier cost structure) is never exercised; all blocks remain in HBM throughout.
Deterministic generation (temperature~0.0) eliminates output-length variance that would create heterogeneous decode loads (Game~3) and limits the diversity of congestion patterns.
Whether the three-regime PoA structure and parameter invariance hold under variable-length, multi-turn, or production-trace workloads is an open question.
A \texttt{long\_context} workload type was implemented in the experiment code but not exercised in the reported experiments.

\paragraph{Game 1 (P/D allocation) is not empirically validated.}
All experiments use fixed prefill/decode splits; the Planner's dynamic GPU reallocation (Game~1) is analyzed theoretically (Proposition~\ref{prop:pd-equilibrium}) but never tested.
GPU utilization metrics for the P/D equilibrium analysis are synthetically generated from the fixed topology.
This means our empirical findings validate Games~2 and~3 (cache placement and routing) but not Game~1.

\paragraph{Scope.}
The regime structure holds across both models and all tested topologies; extending to MoE architectures, larger clusters ($m \gg 5$), and heterogeneous production traffic would map the boundaries of where these properties hold.

\paragraph{Routing baselines: PoA captures temporal dynamics.}
Our PoA measurements characterize \dynamo's built-in KV-aware greedy router.
A static counterfactual analysis (assigning requests to workers under the same cost model (Section~\ref{sec:ctrl-implementation}) using round-robin, random uniform, and power-of-two-choices policies) shows all policies within $\sim$0.3--10\% of the Hungarian optimal at $C \geq 8$ for both $m = 2$ and $m = 5$ worker topologies (at $C < 8$, the small number of in-flight requests causes wider divergence between policies).\footnote{At a few concurrency levels on $m=5$ workers, the KV-aware greedy policy achieves a PoA ratio slightly below 1.0 (e.g., 0.968 at $C=64$). This occurs because the frozen-latency Hungarian assignment is not a strict lower bound when the greedy policy exploits KV cache overlap information that the Hungarian cost matrix approximates imperfectly.}
This is a key result: it confirms that the empirically measured $\widehat{\text{PoA}}$ of ${\sim}19\times$ (340B) and ${\sim}7\times$ (70B) is driven by \emph{temporal} dynamics (queuing delays, prefill/decode contention, and batch scheduling), not by the static assignment algorithm.
The two-order-of-magnitude gap between the static routing $\widehat{\text{PoA}}$ ($\sim$1.02--1.08) and the empirical $\widehat{\text{PoA}}$ ($\sim$7--19) demonstrates that the $\widehat{\text{PoA}}$ metric captures system-level inefficiency that conventional routing comparisons miss.
Deriving queuing-theoretic baselines (e.g., M/G/$k$ predictions of the saturation knee) remains a complementary direction.

Game~2 (KV cache placement) is validated indirectly through the $\omega$-dimension of the Pareto sweep (Section~\ref{sec:res-pareto-saturated}): the cache game's PoA contribution is negligible below saturation and creates a measurable congestion-affinity tradeoff at saturation.
The $\omega$ sweep was bounded at $\omega = 1.0$ (\dynamo's default upper end); the $\omega > 1$ regime---cache affinity weighted more heavily than active-block load---was not exercised and could alter the at-saturation tradeoff between cache-warm concentration and load imbalance.
Isolating per-block cache placement decisions (tier promotion, eviction under capacity pressure) would require controlled cache-capacity experiments with diverse prefix-sharing workloads.

The adaptive controller uses fixed thresholds and fixed per-regime parameters.
Continuous parameter interpolation or online learning would be natural extensions; we use the simplest viable controller to isolate the value of regime detection.
Note that the adaptive controller's TTFT improvement on the 340B comes at a throughput cost: 11.6\,rps (adaptive) vs.\ 18.2\,rps (static), a 36\% reduction.
This reflects the shifted operating point: the optimal-parameter frontend prioritizes latency over throughput at saturation.
Whether this tradeoff is desirable depends on the SLO; a production deployment would need to select the target point on the Pareto frontier explicitly.

\subsection{Implications for System Design}
\label{sec:disc-implications}

Three design implications follow from the regime structure:

\paragraph{Expose the PoA as a first-class metric.}
A rising PoA warns that the system is entering a regime where selfish routing degrades rapidly---even when absolute latency is still within SLO bounds.
Serving frameworks should export PoA estimates alongside TTFT and ITL.

\paragraph{Design for regime transitions, not steady state.}
The system spends most of its time in transient states between regimes, not at equilibrium.
Planners and routers should be evaluated on their behavior during transitions (convergence speed, overshoot, oscillation), not just steady-state throughput.

\paragraph{Consider the multi-tenant game.}
In shared clusters, tenants have genuine strategic incentives that single-tenant benchmarks cannot capture.
Game-theoretic mechanism design (strategy-proof auctions, incentive-compatible allocation) becomes directly relevant for multi-tenant disaggregated inference, where tenants might overprovision requests, send dummy prefills to warm cache, or time batch jobs to avoid contention.
\dynamo's Planner currently optimizes aggregate SLOs; a multi-tenant game-theoretic extension would surface incentive properties under strategic tenant behavior and characterize the fairness guarantees it provides---a natural next step jointly with the Planner team.

\section{Conclusion}
\label{sec:conclusion}

This paper modeled NVIDIA \dynamo's disaggregated inference architecture as three coupled games and measured the Price of Anarchy of one of them---request routing---on a 3-node B200 cluster across two models and three topologies.

The empirical observation was the flatness.
Router parameters ($\tau$, $\omega$) did not measurably affect $\widehat{\text{PoA}}$ below saturation.
The index varies by less than $\pm 0.10$ across 16 $(\tau, \omega)$ configurations on the 340B, $\pm 0.08$ on the 70B 1P/2D, and $\pm 0.06$ on the 70B 1P/5D (single trial per configuration; cross-config spread, not measurement uncertainty---see the §\ref{sec:results} measurement note).
Parameters start mattering only at the first post-knee grid point, and that point sits at the same concurrency ($C = 128$) on a 70B-parameter and a 340B-parameter model despite a $4.9\times$ size gap and different TP configurations and KV transfer paths.
One plausible mechanical explanation: both topologies use a \emph{single} prefill worker on identical B200 hardware with identical 128-token inputs, so single-prefill-worker compute exhaustion is expected to occur at a similar in-flight-request count.
Our grid cannot resolve whether the true knees are exactly co-located within $(96, 128]$; denser sweeps on other hardware would test whether the coincidence is deeper than single-worker prefill saturation.

The adaptive controller exploits this structure with the simplest possible mechanism: detect the regime, switch the parameters.
Measured improvements scale with routing-game size: the largest $\widehat{\text{PoA}}$ reduction lands on the 70B 1P/5D configuration ($m=5$), with smaller but consistent $\widehat{\text{PoA}}$ and order-of-magnitude TTFT P99 reductions on the 70B and 340B 1P/2D topologies (Section~\ref{sec:res-adaptive}).
Run-to-run variance shrinks by an order of magnitude in every case.
The deep-saturation $\widehat{\text{PoA}}$ blow-up ($\sim$200--309) is consistent with single-prefill-worker compute exhaustion rather than routing suboptimality; attributing this specifically to Game~1 (P/D split) resource allocation would require varying the P/D split (e.g., 2P/1D), which our fixed-topology experiments cannot do.

Our framework confirms~\citep{nisan2007, roughgarden2015} that game theory's value for inference systems is \emph{analytical, not algorithmic}: vocabulary to characterize behavior, metrics to quantify suboptimality, and regime analysis to guide operations---without runtime equilibrium computation.

\subsection{Future Work}

Several directions extend this work:

\paragraph{Scaling and queuing baselines.}
Parameter sensitivity at saturation differs by model ($1.6\times$ spread on the 340B vs.\ $1.9\text{--}2.0\times$ on the 70B), with the 70B showing similar sensitivity on both $m=2$ and $m=5$ topologies.
Production-scale clusters ($m \gg 5$) would reveal whether this trend continues and whether PoA invariance below saturation eventually breaks.
Static counterfactual analysis already shows all routing policies (round-robin, random, power-of-two-choices) within $\sim$0.3--10\% of the Hungarian optimal at $C \geq 8$, confirming the measured PoA is driven by temporal dynamics rather than algorithm choice (Section~\ref{sec:disc-limitations}).
A queuing-theoretic baseline (e.g., M/G/$k$) would identify which aspects of the regime structure are already predicted by classical models and where the game-theoretic lens adds new insight.

\paragraph{Caching, workloads, and dynamics.}
Our Pareto sweep validates Game~2's aggregate effect (Section~\ref{sec:res-pareto-saturated}); isolating per-block placement decisions requires controlled cache-capacity experiments with diverse prefix-sharing workloads.
Heterogeneous and production-trace workloads would test the generality of the regime structure.
Dynamic game models (differential games, repeated games with discounting) would extend the analysis from stationary equilibria to the transient dynamics that dominate bursty production traffic.

\paragraph{Architecture extensions.}
MoE models introduce a nested congestion game (requests route to GPUs (outer), then to experts (inner)) where the PoA may compound across layers.
In multi-tenant clusters, tenants are genuine strategic agents; mechanism design~\citep{mahajan2020} for strategy-proofness becomes directly relevant (Section~\ref{sec:disc-implications}).

\paragraph{P/D split adaptation.}
A game-theory-informed Planner could use the PoA signal to drive faster, larger P/D adjustments than the load-based Planner's $\pm 1$ worker per 30-second interval (the SLA-based Planner, production default since Dynamo 0.4, computes target replica counts directly but still runs on minute-scale cadence), directly attacking the saturation-regime bottleneck identified above.

\appendix

\section{Game Theory Definitions}
\label{sec:appendix-defs}

We state the standard game-theoretic definitions used throughout the paper; see~\citet{nisan2007} for a comprehensive treatment.

\begin{definition}[Normal-form game]
A \emph{normal-form game} $\Gamma = (N, (S_i)_{i \in N}, (u_i)_{i \in N})$ consists of a set of players $N = \{1, \ldots, n\}$, a strategy set $S_i$ for each player $i$, and a utility function $u_i : \prod_{j \in N} S_j \to \R$ for each player $i$.
\end{definition}

\begin{definition}[Nash equilibrium]
A strategy profile $s^* = (s_1^*, \ldots, s_n^*)$ is a \emph{Nash equilibrium} if no player can improve their utility by unilaterally deviating:
\[
  u_i(s_i^*, s_{-i}^*) \geq u_i(s_i, s_{-i}^*) \quad \forall i \in N, \; \forall s_i \in S_i
\]
where $s_{-i}^*$ denotes the strategies of all players except $i$.
\end{definition}

\begin{definition}[Congestion game~\citep{rosenthal1973}]
A \emph{congestion game} consists of a set of players $N$, a set of resources $R$, and for each player $i$, a collection of feasible resource subsets $\Sigma_i \subseteq 2^R$.
Each resource $r \in R$ has a cost function $c_r : \N \to \R$ that depends only on the number of players using $r$.
The cost to player $i$ under strategy profile $\sigma$ is $C_i(\sigma) = \sum_{r \in \sigma_i} c_r(n_r(\sigma))$, where $n_r(\sigma)$ is the number of players using resource $r$.
Every congestion game is an \emph{exact potential game}~\citep{monderer1996}: there exists a potential function $\Phi(\sigma) = \sum_{r \in R} \sum_{k=1}^{n_r(\sigma)} c_r(k)$ such that any unilateral deviation that decreases a player's cost also decreases $\Phi$.
Consequently, every congestion game possesses at least one pure Nash equilibrium.
\end{definition}

\begin{definition}[Price of Anarchy~\citep{roughgarden2002}]
The \emph{Price of Anarchy} of a game is the ratio of the social cost at the worst Nash equilibrium to the social cost at the social optimum:
\[
  \text{PoA} = \frac{\max_{s^* \in \text{NE}} \text{SC}(s^*)}{\min_{s \in S} \text{SC}(s)}
\]
where $\text{SC}(s) = \sum_{i} C_i(s)$ is the social cost (sum of individual costs) and $\text{NE}$ is the set of Nash equilibria.
A Price of Anarchy of 1 means selfish behavior is socially optimal; larger values indicate inefficiency due to uncoordinated decision-making.
\end{definition}

\begin{definition}[Pareto optimality]
A strategy profile $s$ is \emph{Pareto optimal} (or Pareto efficient) if there is no alternative profile $s'$ such that $u_i(s') \geq u_i(s)$ for all players $i$ and $u_j(s') > u_j(s)$ for at least one player $j$.
The \emph{Pareto frontier} is the set of all Pareto-optimal outcomes.
In the context of LLM serving, the Pareto frontier typically characterizes the tradeoff surface across throughput, TTFT, ITL, and cost.
\end{definition}

\begin{definition}[Wardrop equilibrium~\citep{wardrop1952}]
In a routing game with a continuum of infinitesimally small players, a \emph{Wardrop equilibrium} is a flow assignment where all used paths between any origin-destination pair have equal cost, and no unused path has lower cost.
This is the continuous-flow analog of Nash equilibrium, applicable when the number of requests is large relative to the number of GPUs.
\end{definition}

\ifanon
\else
  \section*{Acknowledgements}
  We thank Kevin Cochrane for the initial discussions that motivated this line of inquiry, and Vultr for providing the compute infrastructure used in our experiments.
\fi

\bibliography{references}

\end{document}